\newcommand{\ket}[1]{\left| #1\right\rangle}        
\newcommand{\bra}[1]{\left\langle #1\right|}        
\newcommand{\ketbra}[2]{| #1 \rangle\!\langle #2 |} 
\newtheorem{definition}{Definition}
\newtheorem{theorem}{Theorem}
\newtheorem{lemma}{Lemma}
\newcommand{\eq}[1]{Eq.~\hyperref[eq:#1]{(\ref*{eq:#1})}}
\renewcommand{\sec}[1]{\hyperref[sec:#1]{Section~\ref*{sec:#1}}}
\newcommand{\app}[1]{\hyperref[app:#1]{Appendix~\ref*{app:#1}}}
\newcommand{\tab}[1]{\hyperref[tab:#1]{Table~\ref*{tab:#1}}}
\newcommand{\fig}[1]{\hyperref[fig:#1]{Figure~\ref*{fig:#1}}}
\newcommand{\figa}[2]{\hyperref[fig:#1]{Figure~\ref*{fig:#1}#2}}
\newcommand{\figx}[2]{\hyperref[fig:#1]{Figure~\ref*{fig:#1}(#2)}}
\newcommand{\thm}[1]{\hyperref[thm:#1]{Theorem~\ref*{thm:#1}}}
\newcommand{\lem}[1]{\hyperref[lem:#1]{Lemma~\ref*{lem:#1}}}
\newcommand{\cor}[1]{\hyperref[cor:#1]{Corollary~\ref*{cor:#1}}}
\newcommand{\defn}[1]{\hyperref[def:#1]{Definition~\ref*{def:#1}}}
\newcommand{\alg}[1]{\hyperref[alg:#1]{Algorithm~\ref*{alg:#1}}}
\newcommand{\prob}[1]{\hyperref[prob:#1]{Problem~\ref*{prob:#1}}}
\begin{document}
\title{Tomography and Generative Data Modeling via Quantum Boltzmann Training}
\author{M\'{a}ria Kieferov\'{a}, Nathan Wiebe}
\begin{abstract}
The promise of quantum neural nets, which utilize quantum effects to model complex data sets, has made their development an aspirational goal for quantum machine learning and quantum computing in general.  Here we provide new methods of training quantum Boltzmann machines, which are a class of recurrent quantum neural network.  Our work generalizes existing methods and provides new approaches for training quantum neural networks that compare favorably to existing methods.  We further demonstrate that quantum Boltzmann machines enable a form of quantum state tomography that not only estimates a state but provides a prescription for generating copies of the reconstructed state.  Classical Boltzmann machines are incapable of this.   Finally we compare small non--stoquastic quantum Boltzmann machines to traditional Boltzmann machines for generative tasks and observe evidence that quantum models outperform their classical counterparts.
\end{abstract}
\date{\today}
\maketitle

\emph{Introduction}-- The Boltzmann machine is a widely used type of recurrent neural net that, unlike the feed forward neural nets used in many applications, is capable of generating new examples of the training data~\cite{hinton2002training}.  This makes it an excellent model to use in cases where data is missing.  We focus on Boltzmann machines because, of all neural net models, the Boltzmann machine is perhaps the most natural one for physicists.  It models the input data as if it came from an Ising model in thermal equilibrium.  The goal of training is then to find the Ising model that is most likely to reproduce the input data which is known as a training set.

The close analogy between this model and physics has made it a natural fit for quantum computing and quantum annealing.  A number of proposals have been put forward for accelerating Boltzmann machines in current generation quantum annealers~\cite{denil2011toward,adachi2015application,benedetti2015estimation} and quantum computers~\cite{wiebe2016quantum}, the latter showing polynomial speedups relative to classical training~\cite{wiebe2015quantum}.  While these methods showed that quantum technologies can train Boltzmann machines more accurately and at lower cost than classical methods, the question of whether transitioning from an Ising model to a quantum model for the data would provide substantial improvements.

This question is addressed in~\cite{amin2016quantum}, wherein a new method for training Boltzmann machines is provided that uses transverse Ising models in thermal equilibrium to model the data.  While such models are trainable and can outperform classical Boltzmann machines, the training procedure proposed therein suffers two drawbacks.  First, it is unable to learn quantum terms from classical data.  Second, the transverse Ising models considered are widely believed to be simulatable using quantum Monte-Carlo methods.  This means that such models are arguably not quantum and as such the benchmarks they give do not necessarily apply to manifestly quantum models.  Here we rectify these issues by giving new training methods that do not suffer these drawbacks and illustrate their performance for models that are manifestly quantum.

The first, and arguably most important, task when approaching the problem of training Boltzmann machines within a quantum setting is to define the model and the problem.  Our approach to quantum Boltzmann machine training requires two inputs in order to specify the training process.  The first is a Hamiltonian model that is used to enforce energy penalties between different states as well as to allow quantum correlations between concepts.  We formally define a Hamiltonian for quantum Boltzmann training as follows.


\begin{definition}\label{def:qBoltz}
Let $V$ be a set of $n$ vertices and $E$ be a set of edges connecting these vertices.
Then define $H\in \mathbb{C}^{2^n\times 2^n}$ to be a Hermitian matrix such that $H=H_{\rm cl}+H_{\rm qm}$ where $H_{\rm cl}$ is the classical Boltzmann model
$$
H_{\rm cl} =  \sum_{j\in V} b_j \hat n_j + \sum_{k \in E} w_k \hat n_{k_1}\hat n_{k_2},
$$
where $\hat n_j =(\openone - \sigma_z^{(j)})/2$ and $H_{\rm qm}$ is a matrix such that $\|H_{\rm qm}-{\rm diag}(H_{\rm qm})\|>0$.
\end{definition}
The second element is the training data.  The training data is what provides the Boltzmann machine with typical patterns that we aim to train it to recognize.  In classical Boltzmann training, the training data (for binary units) comprises of a set of boolean vectors on $n$ bits.  There is a much richer set of training vectors that are permissible in quantum training.


\emph{Training Quantum Boltzmann Machines}--  Examining the quantum analogue of machine learning algorithms is very much like examining the 
quantum analogue of classical dynamical systems: there are many ways that the classical Boltzmann machine can be translated to a quantum 
setting.  We propose two methods that we refer to as POVM based training and state based training.  The basic 
difference between the two approaches stems from the correspondence used to generalize the notion of classical training data into a quantum 
setting.

POVM based training is a generalization of the approach in the work of Amin et al~\cite{amin2016quantum}, which assumes that the user is provided with a 
discrete set of training vectors that are assumed to be sampled from an underlying training distribution.  In the case of the prior work, 
the algorithm is trained using projectors on the classical training states.  The goal of training is then to find the quantum Hamiltonian 
that maximizes the log--likelihood of generating the observed training vectors.  Here we describe the training set as a set of measurement 
and generalize it to allow the measurement record to 
correspond to labels of POVM elements. In the more general approach, the training data can correspond to density operators as 
well as non-orthogonal states.

\begin{definition}
Let $\mathcal{H}:=\mathcal{V}\otimes \mathcal{L}$ be a finite-dimensional Hilbert space describing a quantum Boltzmann machine and let $\mathcal{V}$ and $\mathcal{L}$ be subsystems corresponding to the visible and latent units of the QBM.
The probability distribution $P_v$ and POVM $\Lambda = \{\Lambda_v\}$,  comprise a training set for QBM training if 1) there exists a bijection between the domain of $P_v$ and $\Lambda$ and 2) the domain of each $\Lambda_v$  is $\mathcal{H}$ and it acts non--trivially only on subsystem $\mathcal{V}$.
\end{definition}

As a clarifying example, consider the following training set.  Let us imagine that we wish to train a model that generates even numbers.  Then a sensible training set would be
\begin{align}
\Lambda_n &= \ketbra{2n}{2n} \text{ for } 1\leq n\leq 8\label{eq:Lambdaex}\\
\Lambda_0 &= \openone - \sum_{n=1}^{8} \Lambda_n,\qquad
P_v=(1-\delta_{v,0})/8.
\end{align}
The following equivalent training set can also be used
\begin{align}
\Lambda_1 &= \frac{1}{{8}} \big(\ket{2} + \cdots +\ket{16}\big)\big(\bra{2} + \cdots +\bra{16}\big), \\
\Lambda_0 &= \openone - \Lambda_1,\qquad
P_v = \delta_{v,1}.
\end{align}
Both learning problems aim to mimic the same probability distributions.
This shows that the training data for quantum Boltzmann training can be complex even when a single 
 training vector is used.  

The second approach assumes that the training data is provided directly through a quantum state that gives the true distribution over the data.  This approach is typically stronger than POVM based training because measurements of the state can provide the statistics needed to perform the former form of training.  We will see that it has an advantage in that it can easily allow quantum Boltzmann machines to perform a type of tomographic reconstruction of the state and also can allow forms of training that make fewer approximations than existing methods.  The former advantage is particularly significant as it creates a link between quantum state tomography and quantum machine learning. We define the training set as follows.

\begin{definition}
Let $\mathcal{H}$ be a finite--dimensional Hilbert space and let $\rho$ be a Hermitian operator on $\mathcal{H}$. The operator $\rho$ is a training set for state based training if it is a density operator.
\end{definition}

As a clarifying example, the training data given in~\eq{Lambdaex} could correspond to the following training data for state based learning
\begin{equation}
\rho = \frac{1}{8}\left(\ketbra{2}{2} + \cdots + \ketbra{16}{16} \right)
\end{equation}
In state based training we assume that copies of $\rho$ are prepared by an oracle and will not assume that the user has neither performed any experiments on $\rho$ nor has any prior knowledge about it.  This is in contrast to POVM based training where the user has a set of measurement records but not the distribution it was drawn from.

\emph{Hamiltonian}--
The last part to specify is the Hamiltonian for a quantum Boltzmann machine.  There are many Hamiltonians that 
one could consider.  Perhaps the most natural extension to the Boltzmann machine is to consider the transverse Ising model, which was 
investigated by Amin et al~\cite{amin2016quantum}.  Here we consider a different example motivated by the fact that the stoquastic Hamiltonians used in 
previous work can be efficiently simulated using quantum Monte-Carlo methods.  In order to combat this, we explicitly consider Hamiltonians 
that are Fermionic because the Fermionic sign problem prevents quantum Monte-Carlo methods from providing an efficient simulation.

\begin{definition}\label{def:qBoltz}
Let $V$ be a set of $n$ vertices and $E$ be a set of edges connecting these vertices.
Then define $H\in \mathbb{C}^{2^n\times 2^n}$ to be a Hermitian matrix such that $H=H_{\rm cl}+H_{\rm qm}$ where $H_{\rm cl}$ is the classical Boltzmann model
$$
H_{\rm cl} =  \sum_{j\in V} b_j \hat n_j + \sum_{k \in E} w_k \hat n_{k_1}\hat n_{k_2},
$$
where $\hat n_j =(\openone - \sigma_z^{(j)})/2$ and $H_{\rm qm}$ is a matrix such that $\|H_{\rm qm}-{\rm diag}(H_{\rm qm})\|>0$.
\end{definition}

The Hamiltonian we consider is of the form
\begin{equation}
H = H_{p} + \frac{1}{2} H_{pq} + \frac{1}{2} H_{pqrs},
\end{equation}
where $H_{p}= \sum_p h_{p} \left( a_p + a_p^{\dagger} \right)$, $H_{pq} = \sum_{pq}h_{pq} \left(  a_p^{\dagger} a_q +  a_q^{\dagger} a_p  \right)$ and $H_{pqrs} = \sum_{pqrs}h_{pqrs} (a_p^{\dagger}a_q^{\dagger} a_r a_s +h.c.)$
Here $a_p$ and $a^\dagger_p$ are Fermionic creation and annihilation operators, which create and destroy Fermions at unit $p$.  They have the properties that
$a^\dagger \ket{0}=\ket{1}$,
$a^\dagger \ket{1}=0$ and
$a^\dagger_p a_q + a_qa^\dagger_p = \openone\delta_{pq}.$
The Hamiltonian here corresponds to the standard Hamiltonian used in quantum chemistry modulo the presence of the non--particle conserving $H_p$ term.
Note that all terms in the Hamiltonian conserve the number of Fermions with the exception of $H_{p}$.  We include this term to allow the distribution to have superpositions over different numbers of Fermions which is necessary for this model to be able to learn generative models for certain classes of pure states.

\begin{figure*}

\begin{minipage}{0.49\linewidth}
\includegraphics[width=\textwidth]{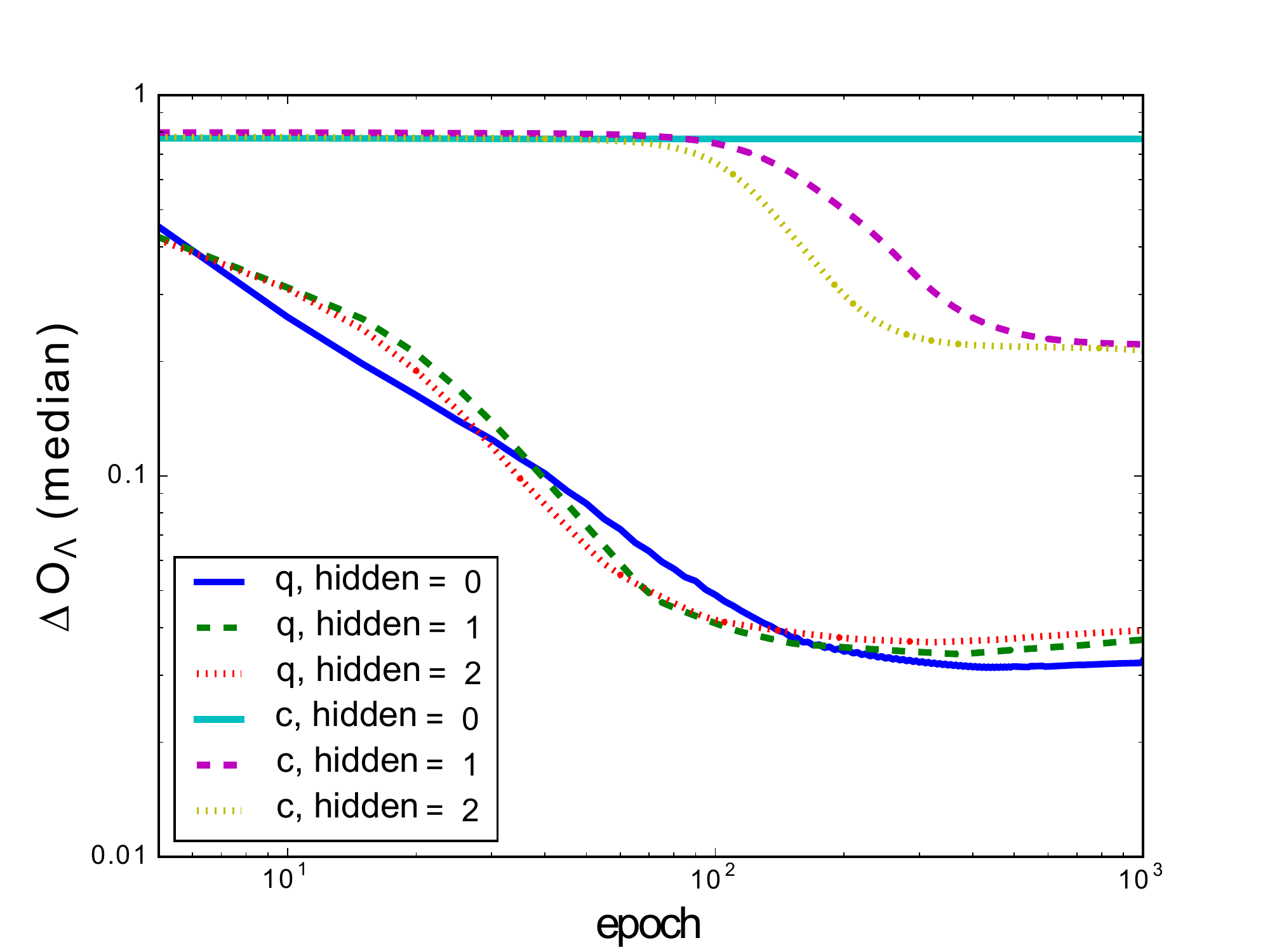}\label{5}
(a)
\end{minipage}
\hspace{1mm}
\begin{minipage}{0.49\linewidth}
\includegraphics[width=\textwidth]{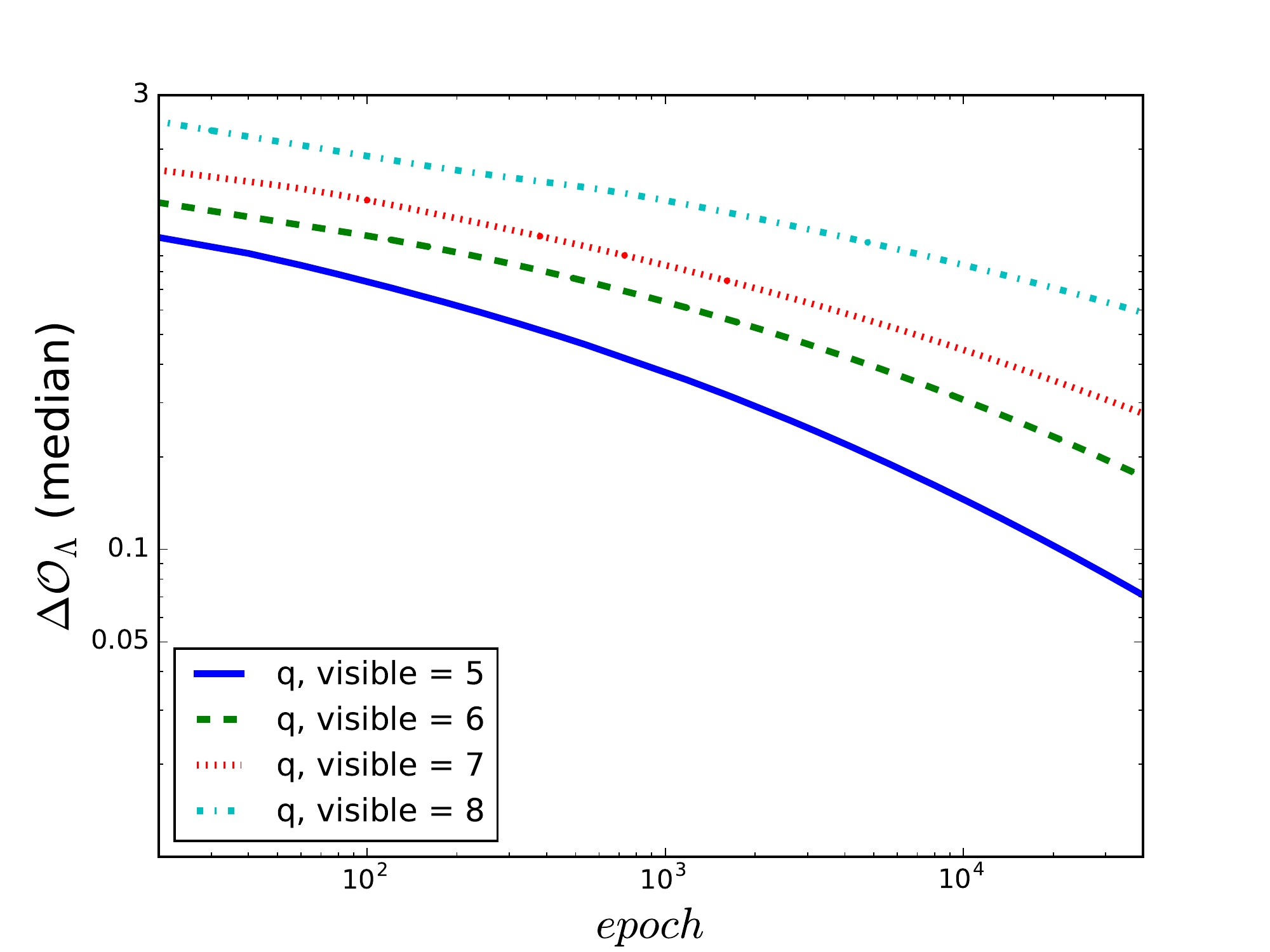}\label{fig:visible}
(b)
\end{minipage}
\caption{Simulation of QBM with POVM training. Subfigures (a)  and (b) where we compute $\Delta\mathcal{O}_{\Lambda}:= \mathcal{O}_{\Lambda,\max} - \mathcal{O}_{\Lambda}$ for (a) $5$ visible units and varying numbers of hidden units and (b) for all relative entropy training with all visible Boltzmann machines.  We take $\lambda=0$ for all data considered and $\max\mathcal{O}_{\Lambda,\max}$ is the maximum value of the training objective function attainable for the training data.}
\label{fig:4plots}
\end{figure*}

By inspection it is clear from the fact that $n_p =a^\dagger_pa_p$ that this Fermionic 
Hamiltonian reduces to the Ising model used in Boltzmann training if we set $H_{p}\rightarrow 0$ and take all other non--diagonal terms in 
$H$ to be $0$.  Therefore, apart from being difficult to simulate using quantum Monte-Carlo, this model of Fermionic Boltzmann machines 
encompasses traditional Boltzmann training while at the same time being expressible as $H=H_{\rm qm} + H_{\rm cl}$ as per~\defn{qBoltz}.

\emph{Golden--Thompson training}--  The training process for a quantum Boltzmann machine can be viewed as optimizing an objective function that measures how close the distribution generated from the model is to the underlying data distribution.  The goal is to modify the Hamiltonian parameters to maximize the objective function given the data collected and the Hamiltonian model.  

We consider two different forms of the objective function here corresponding to POVM based training and state based training.  The first, 
and simplest, objective function that we will discuss is that corresponding to POVM based training.  The goal of this form of training is to 
minimize the KL-divergence between the empirically observed data distribution in the training set and that produced by the Hamiltonian model 
in thermal equilibrium.  Let us define our POVM to be $\Lambda := \{\Lambda_0,\ldots, \Lambda_N\}$ and further define
\begin{equation}
\mathcal{L}:=P(v|H) := {\rm Tr}\left[ \Lambda_v  \frac{e^{-H}}{{\rm Tr} [e^{-H}]}\right],
\end{equation}
where $\openone_h$ is the identity operator on the hidden units of the model.  The KL divergence is then
\begin{equation}
{\rm KL}(P\|\mathcal{L})=\sum_v P_v \log\left(\frac{P_v}{P(v|H)} \right),
\end{equation}
and since $P_v$ is a constant, minimizing this objective function is equivalent to maximizing $\sum_v P_v \log(P(v|H))$.  The latter term is known as the average log--likelihood.  The objective function that we then wish to optimize is
\begin{equation}
\mathcal{O}_{\Lambda}(H;\lambda) = \sum_{\mathbf{v}} P_{\mathbf{v}} \log\left({\frac{{\rm Tr}\left[ \Lambda_v e^{-H}\right]}{{\rm Tr}\left[ e^{-H} \right]}}\right)-\frac{\lambda}{2} \|h_Q\|^2.
\end{equation}
where $h_Q$ is the vector of Hamiltonian terms that correspond to off-diagonal matrix elements.  The last term is an $L_2$ regularization term that we include to penalize quantum terms that are not needed to explain the data.

While this objective function is unlikely to be generically computable because the calculation of ${\rm Tr}[e^{-H}]$ is a $\#\P$--hard problem, the gradients of the objective function are not hard to estimate for classical Boltzmann machines.  A challenge emerges for quantum Boltzmann machines because $H$ need not commute with its parametric derivatives.  In particular, let $\theta$ be a Hamiltonian parameter then Duhamel's formula gives 
\begin{equation}
{\rm Tr}\left[\Lambda_v \partial_\theta e^{-H}\right]={\rm Tr}\left[\int_0^1 \Lambda_v e^{sH} \left[\partial_\theta H\right] e^{(1-s)H} \mathrm{d}s\right].\label{eq:duhamel}
\end{equation}
If $\Lambda_v$ commuted with $H$, then we would recover an expression for the gradient that strongly resembles the classical case.  

A solution to this problem was proposed in~\cite{amin2016quantum}, wherein the Golden--Thompson inequality is used to optimize a lower bound to the objective function.  Upon using this inequality we find the following expression for the derivative of the objective function.

\begin{align}
& \sum_{\mathbf{v}} P_{\mathbf{v}} \log\left({\frac{{\rm Tr}\left[ \Lambda_v e^{-H}\right]}{{\rm Tr}\left[ e^{-H} \right]}}\right)-\frac{\lambda}{2} \|h_q\|^2\nonumber\\
&\qquad\ge   \sum_{\mathbf{v}} P_{\mathbf{v}} \log\left({\frac{{\rm Tr}\left[  e^{-H+\log(\Lambda_v)}\right]}{{\rm Tr}\left[ e^{-H} \right]}}\right)-\frac{\lambda}{2} \|h_q\|^2,
\end{align}
 which leads to to an expression analogous to the classical training.   This inequality is saturated when $[\Lambda_v,H_v]=0$, from the Baker-Campbell-Hausdorff formula.
  The gradient of the lower bound on the objective is
\begin{equation}
 \sum_{\mathbf{v}}P_{\mathbf{v}} \Big( -\frac{\text{Tr}[e^{-H_v} \partial_{\theta} H]} {\text{Tr}[e^{-H_v}] } + \frac{\text{Tr}[e^{-H}\partial_{\theta}H]} {\text{Tr}[e^{-H}] }  \Big)-\lambda h_{\theta} \delta_{H_\theta \in H_Q},
\end{equation}
where  $H_v = H -\log \Lambda_v $.  We will examine this form of training below and find that it yields excellent gradients in the cases considered and agrees well with the exact expectation values, as we show in the appendix.

This form of training is incapable of learning any component of the Hamiltonian such that $\text{Tr}[e^{-H_v} \partial_{\theta} H]=0$.
This meant that the weights corresponding to quantum terms could not be trained directly in previous work that only considered the training data to arise from $\Lambda_v = \ketbra{y_v}{y_v}$ where $y_v$ is the binary representation of the $v^{\rm th}$ training vector.  These components instead had to be guessed, which becomes impractical as the number of independent quantum terms grows.  Here we eschew such approaches by allowing $\Lambda_v$ to contain non--diagonal POVM elements.

\emph{Relative Entropy Training}--The second approach that we will consider is optimizing the relative entropy instead of the average 
log--likelihood.  In this case, the objective function that we wish to optimize is
\begin{equation}
\mathcal{O}_{\rho}(H;\lambda) = S(\rho\| e^{-H}/{\rm Tr}[e^{-H}])-\frac{\lambda}{2} \|h_Q\|^2,
\end{equation}
and the derivatives of the objective function are
\begin{equation}
-{\rm Tr}[\rho \partial_\theta H] + {\rm Tr}[e^{-H}\partial_\theta H]/{\rm Tr}[e^{-H}]-\lambda h_{\theta} \delta_{H_\theta \in H_Q}.
\end{equation}
Thus we can systematically make the state generated by a simulator of $e^{-H}/Z$ harder to distinguish from the state $\rho$ by following a gradient given by the difference between expectations of the Hamiltonian terms in the data distribution $\rho$ and the corresponding expectation values for $e^{-H}/Z$.

$\mathcal{O}_{\rho}$ is motivated by the fact that $S(\rho||e^{-H}/Z)\ge \|\rho -e^{-H}/Z\|^2/2\ln(2)$ if $\rho$ is positive definite. Thus if $\rho$ has maximum rank, $S(\rho||e^{-H}/Z)\rightarrow 0$ implies $e^{-H}/Z \rightarrow \rho$. 

There are two major advantages to this method.  The first is that no approximations are needed to compute the gradient.  The second is that this directly enables a form of tomography wherein a model for the state $\rho$ is provided by the Hamiltonian learned through the gradient ascent process.  This is further interesting because this procedure is efficient, given that an accurate approximation to the thermal state can be prepared for $H$, and thus it can be used to describe states in high dimensions.  The procedure also provides an explicit procedure for generating copies of this inferred state, unlike conventional tomographic methods.


\begin{figure}[t!]
\includegraphics[width=\linewidth]{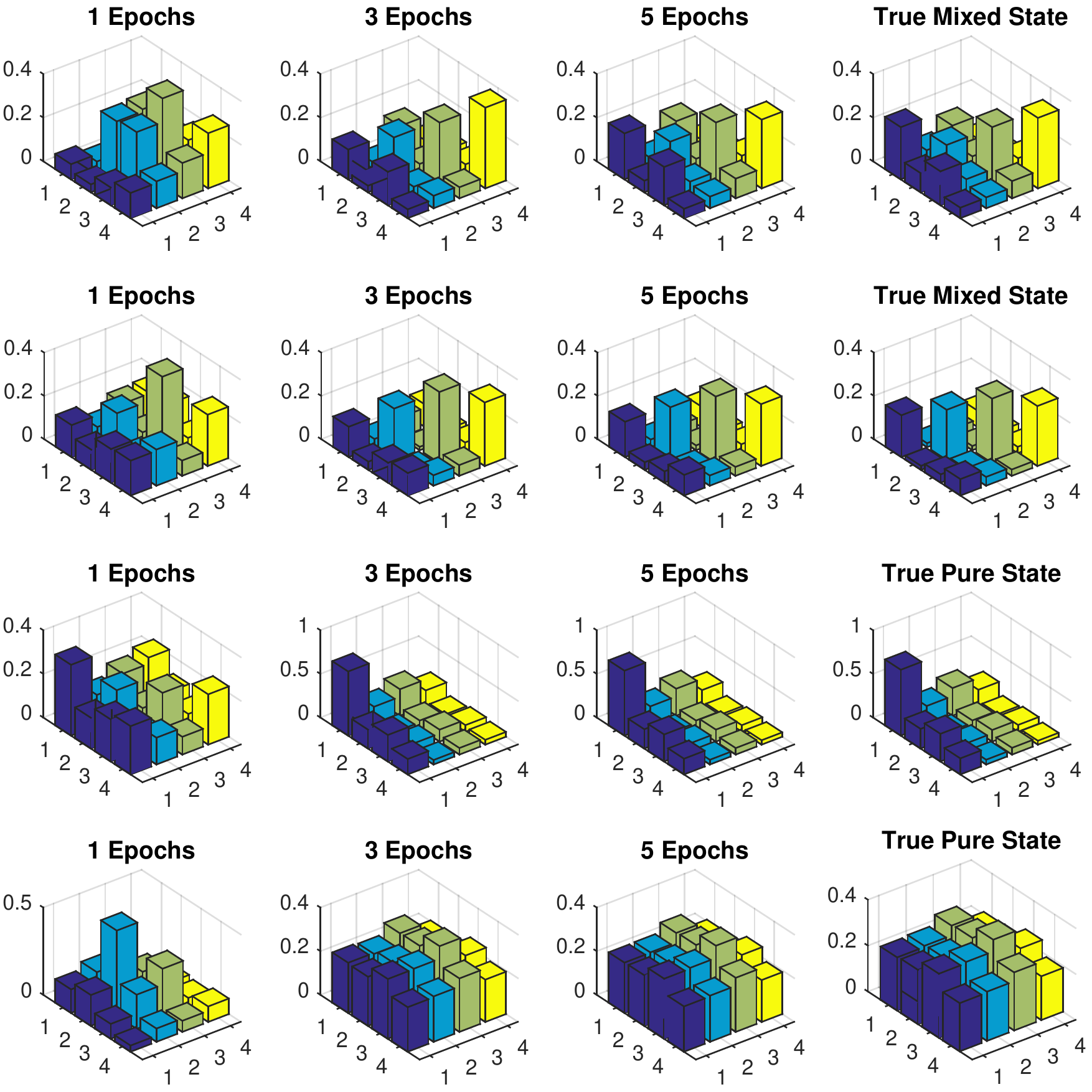}
\caption{Absolute values of tomographic reconstructions of random two--qubit mixed and pure states using relative entropy training with $\eta=1$.}\label{fig:tomo}
\end{figure}

\emph{Numerical Results}--We first examine the performance of our algorithm for generative training using a small number of visible and hidden units and compare the result to classical training.  Since we can only simulate small quantum computers classically, we choose a simple training set composed of step functions where the step occurs at each possible value with $10\%$ noise added to the vectors.  We assume that $\Lambda_1 = \ketbra{\psi}{\psi}$, $\Lambda_0 = \openone - \ketbra{\psi}{\psi}$ for POVM training where $\ket{\psi}$ is a pure state constructed in the above fashion.  We similarly take $\rho = \ketbra{\psi}{\psi}$ for relative entropy training.

The data in~\fig{4plots} shows that, in every instance examined, the quantum model does a better job fitting the data than the classical model does. 
We observe that increasing the number of hidden units gives the classical methods a substantial advantage, but we do not notice that adding hidden units substantially improves $\mathcal{O}_{\Lambda}$ here.  This is likely because the training data is sufficiently simple that the $H_{pqrs}$ terms give enough freedoms to fit the data without the need for hidden units.  

When we consider relative entropy training, we note that the value of the objective function seems to systematically grow with the size of the Boltzmann machine.  This is expected because the complexity of the training data grows as we increase the number of visible units.  We see that qualitatively that training continues to improve the value of the objective function here and given the computational resources at our disposal, we were unable to see the learning stop despite training with gradients of the relative entropy objective rather than those of the reported objective function $\mathcal{O}_{\Lambda}$.

This data, along with further data in the supplementary material, suggests that Fermionic quantum Boltzmann machines may be superior models for data; however, further study is needed to ensure that these models are not overfitting the data.

We demonstrate the ability of Quantum Boltzmann training to perform tomography by learning ensembles of two--qubit states that are either Haar--random pure 
states or mixed states that are convex combinations of columns vectors of Haar-random unitary matrices with uniformly distributed weights.  
Here for simplicity we choose our Hamiltonian to consist of every $2$--qubit Pauli operator.  Since this set is complete, every possible 
state can be generated using an appropriate Hamiltonian.  We provide data to this effect in~\fig{tomo}, wherein as few as 5 training epochs suffice 
to learn these states within graphical accuracy.  We provide further details of the error versus epoch tradeoff in the supplementary 
material.



\emph{Conclusion}--  We have investigated a new approach to training quantum Boltzmann machines that does not suffer from the drawbacks presented by previous schemes.  In particular, we see that we can learn both quantum and classical terms in the Hamiltonian through either our POVM-based Golden--Thompson training approach or by training according to the relative entropy.  We then see that this latter example enables a form of tomography, which allows in concert with a quantum simulator for Hamiltonian models to be learned for complex quantum states that cannot be probed using conventional tomographic approaches.

While our work demonstrates the viability of quantum Boltzmann training for broad classes of non-stoquastic Hamiltonians, subsequent work will be needed to establish whether it provides generalizes classical data than classical Boltzmann machines do.  This will be necessary to understand the extent to which quantum models are prone to overfit the data.

Perhaps the most exciting avenue of future work investigated here is the strong link between tomography and quantum machine learning that we establish.  It is our hope that even more powerful and efficient methods for probing quantum systems will be found by combining ideas from quantum machine learning, quantum Hamiltonian learning and tomography within a single protocol.

\begin{acknowledgements}
We would like to thank K.M. Svore, A. Kapoor, F. Brandao and V. Kliuchnikov for useful comments and feedback.
\end{acknowledgements}

\appendix

\section{Preparing Thermal States} An essential part of Boltzmann machine training is sampling from the thermal distribution. Sadly, 
preparing the thermal state is NP-hard. Classical algorithms circumvent this problem by approximating it using contrastive 
divergence~\cite{hinton2002training}. Analogous quantum solutions have been 
proposed in~\cite{poulin2009sampling,yung2012quantum,wiebe2015quantum}.
A high precision approximation can be obtained using
the methods from~\cite{chowdhury2016quantum}. 

The method of Chowdhury and Somma is strongly related to the methods in~\cite{poulin2009sampling,yung2012quantum,wiebe2016quantum}.
The main difference between these methods is that their approach uses an integral transformation to allow the exponential to be approximated
as a linear combination of unitaries.  These operators are then simulated using Hamiltonian simulation ideas as well as ideas from simulating fractional queries.
The complexity of preparing a Gibbs state $\rho\in \mathbb{C}^{N\times N}$ within error $\epsilon$, as measured by the $2$--norm, is from~\cite{chowdhury2016quantum}
\begin{equation}
O\left(\sqrt{\frac{N}{Z}}{\rm polylog}\left(\frac{1}{\epsilon} \sqrt{\frac{N}{Z}} \right) \right),\label{eq:somma}
\end{equation}
for inverse temperature $\beta=1$ and cases where $H$ is explicitly represented as a linear combination of Pauli operators.  This is roughly quadratically better than existing approaches for preparing general Gibbs states if constant $\epsilon$ is required, but constitutes an exponential improvement if $1/\epsilon$ is large.  This approach is further efficient if $Z\in \Theta(N/{\rm polylog}(N))$.  This is expected if roughly a constant fraction of all eigenstates have a meaningful impact on the partition function.  While this may hold in some cases, particularly in cases with strong regularization~\cite{wiebe2016quantum,wiebe2015quantum}, it is not expected to hold generically.

An alternative method for preparing thermal states is proposed by Yung and Aspuru-Guzik.  The approach works by using a Szegedy walk operator whose transition amplitudes are given by the Metropolis rule based on the energy eigenvalue difference between the two states.  These eigenvalues are computed using phase estimation.  A coherent analogue of the Gibbs state is found by using phase estimation on the walk operator, $W$, that follows these transition rules.  The number of applications of controlled $W$ required in the outer phase estimation loop is

\begin{equation}
O\left(\frac{\| H\|^2 }{\epsilon\sqrt{\delta}}\log\left(\frac{\|H\|^2}{\epsilon^2} \right) \right),\label{eq:AAG}
\end{equation}
where $\delta$ is the gap of the transition matrix that defines the quantum walk, $\epsilon$ is the error in the preparation of the thermal state.  Since each application of the walk operator requires estimation of the eigenvalues of $H$, this complexity is further multiplied by the complexity of the quantum simulation.  Provided that the Hamiltonian is a sum of at most $m$ one--sparse Hamiltonians with efficiently computable coefficients then the cost is multiplied by a factor of $m\log(m)/\log\log(m)$ to $m^{2+o(1)}$ depending on the quantum simulation algorithm used within the phase estimation procedure.

These features imply that it is not clear apriori which algorithm is preferable to use for preparing thermal states.  For cases where the partition function is expected to be large or highly accurate thermal states are required,~\eq{somma} is preferable.  If the spectral gap of the transition matrix is small, quantum simulation is inexpensive for $H$ and low precision is required then~\eq{AAG} will be preferable.


\begin{figure*}
\begin{minipage}{0.49\linewidth}
\includegraphics[width=\textwidth]{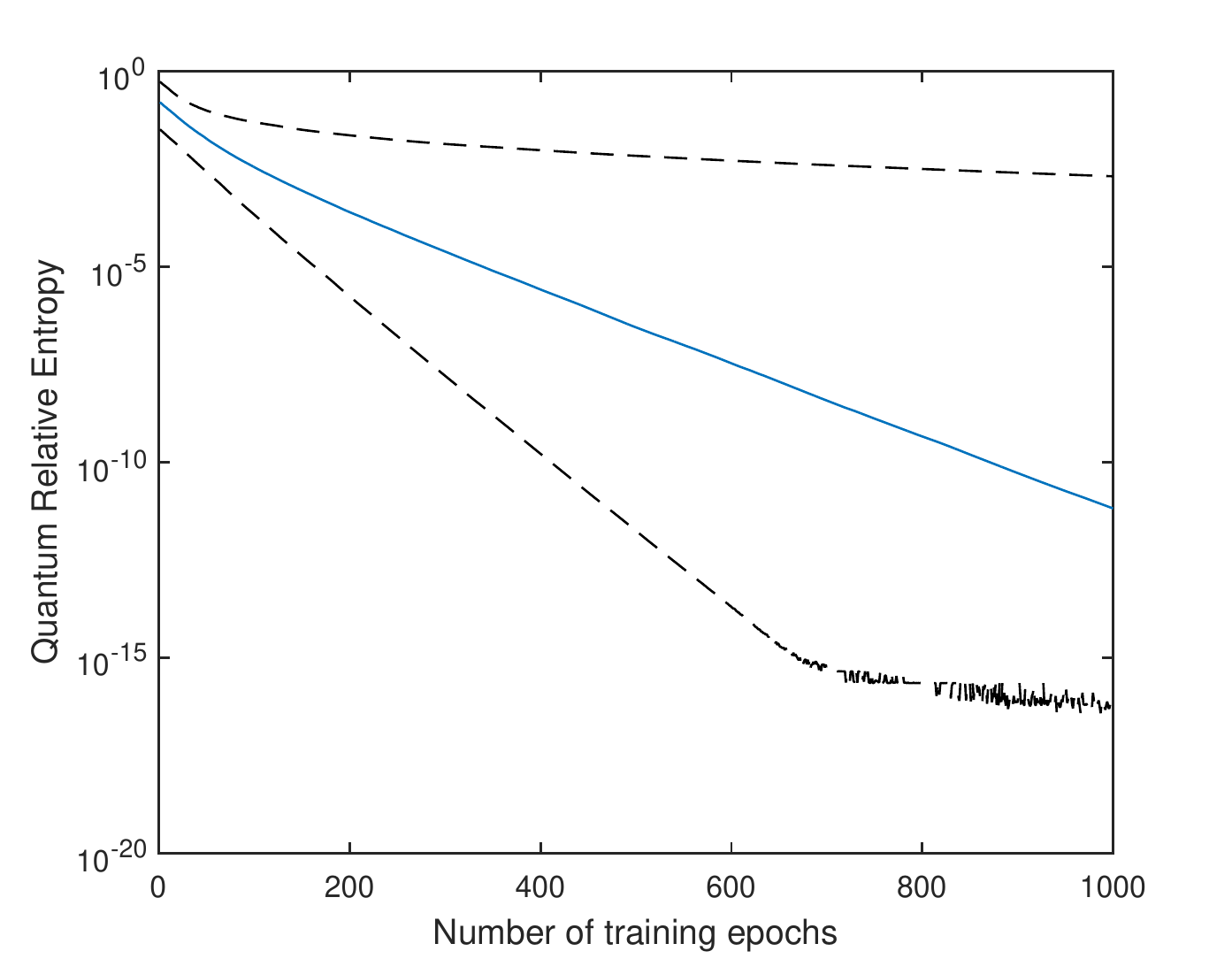}
\end{minipage}
\hspace{1mm}
\begin{minipage}{0.49\linewidth}
\includegraphics[width=\textwidth]{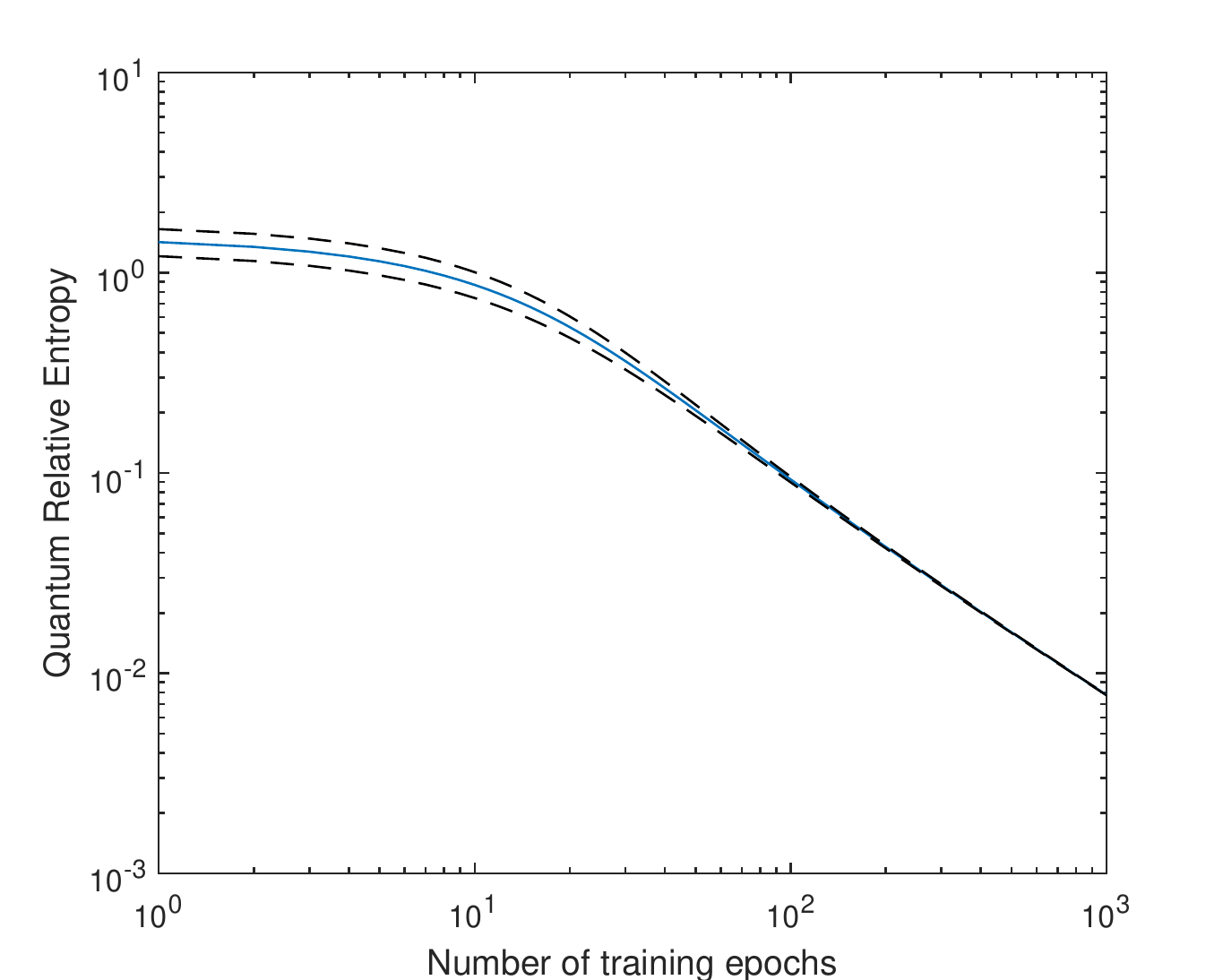}
\end{minipage}
\begin{minipage}{0.49\linewidth}
\includegraphics[width=\textwidth]{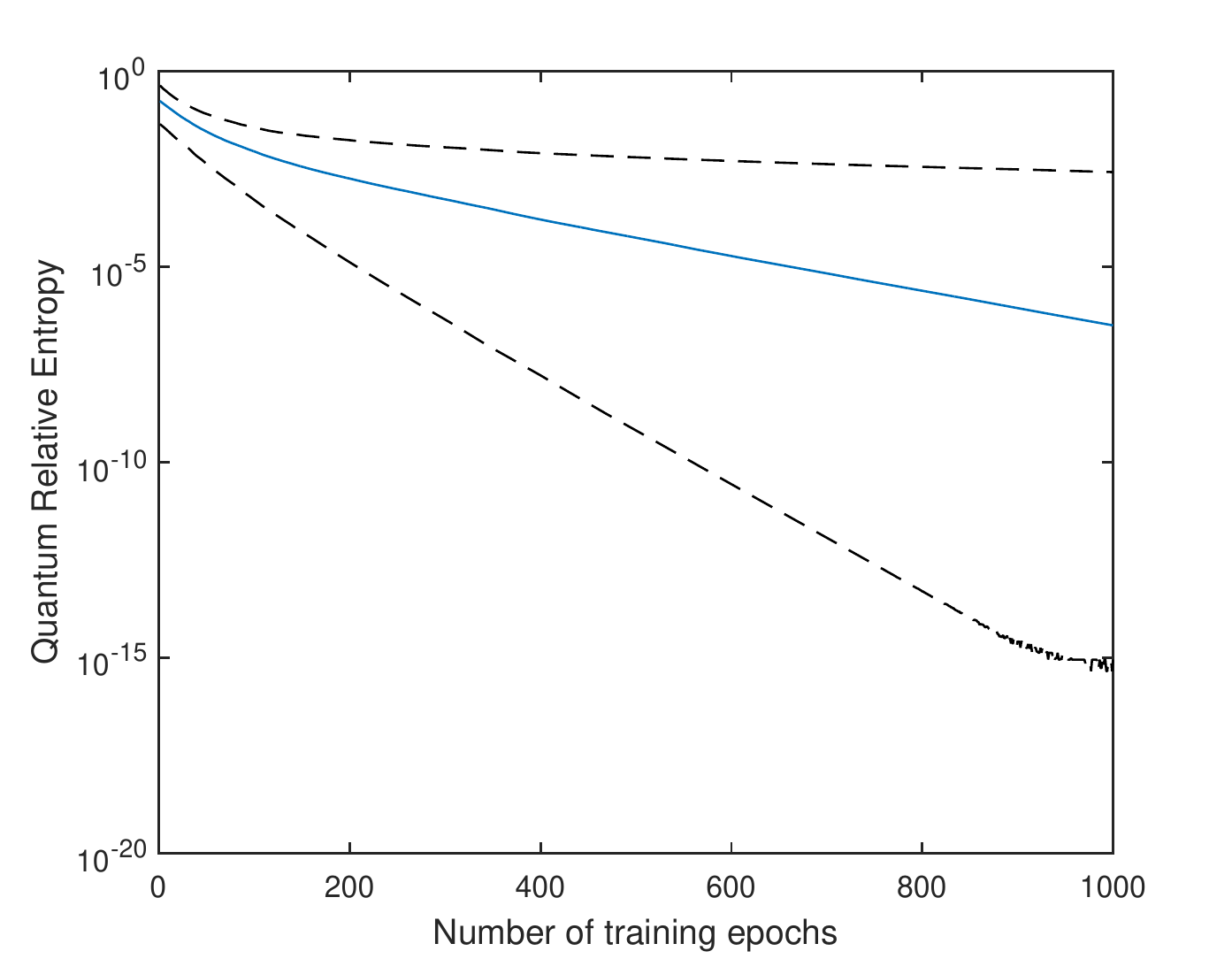}
\end{minipage}
\hspace{1mm}
\begin{minipage}{0.49\linewidth}
\includegraphics[width=\textwidth]{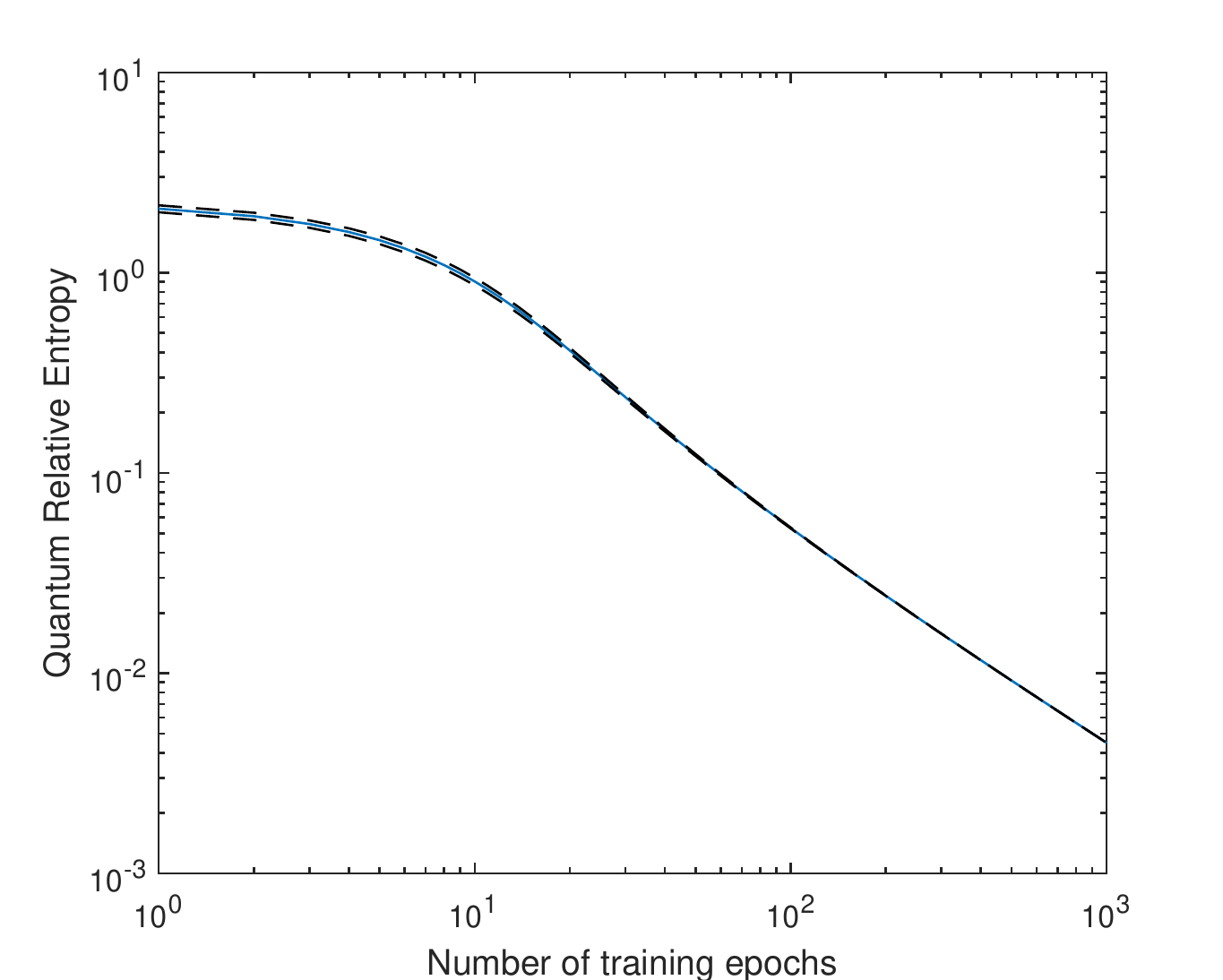}
\end{minipage}
\begin{minipage}{0.49\linewidth}
\includegraphics[width=\textwidth]{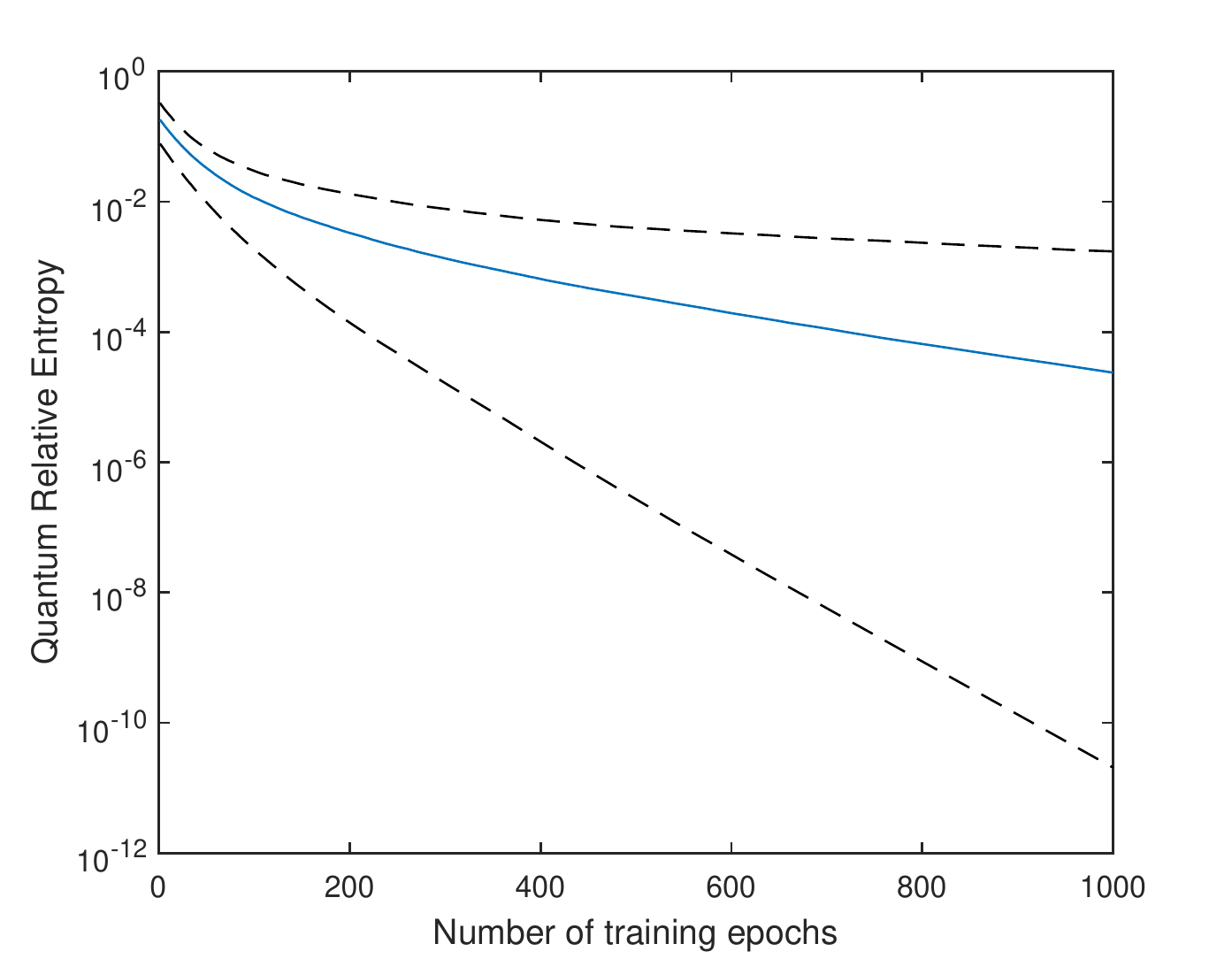}
\end{minipage}
\hspace{1mm}
\begin{minipage}{0.49\linewidth}
\includegraphics[width=\textwidth]{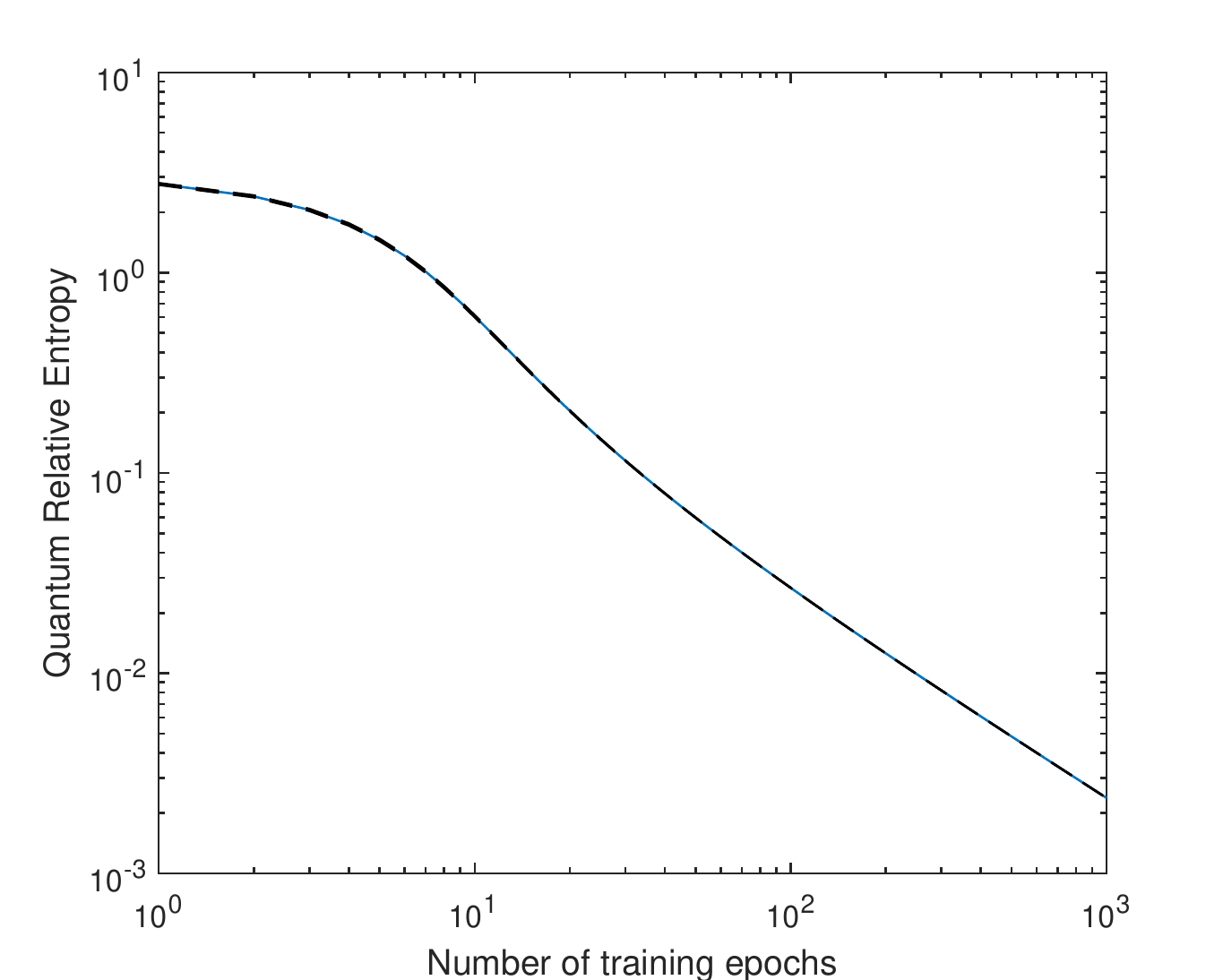}
\end{minipage}
\caption{Distribution of quantum relative entropies between randomly chosen mixed (left) and pure (right) states as a function of the number of training epochs for $2$- (top), $3$- (middle) and $4$- (bottom) qubit tomography with $\eta=0.025$. Dashed lines represent a $90\%$ confidence interval and the solid line denotes the median.\label{fig:relPurity}}
\end{figure*}

\begin{figure*}[t!]
\begin{minipage}{0.49\linewidth}
\includegraphics[width=\textwidth]{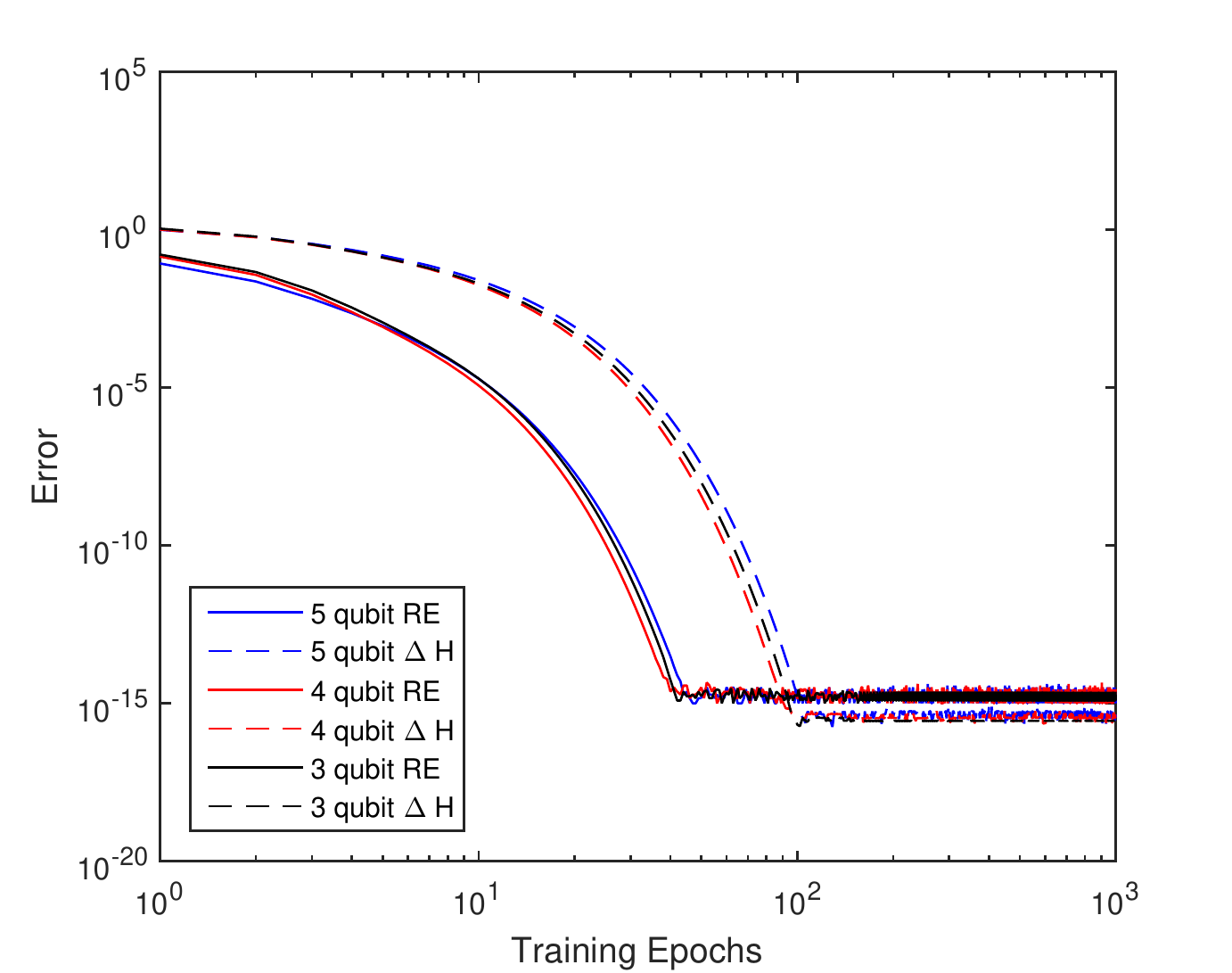}
\end{minipage}
\hspace{1mm}
\begin{minipage}{0.49\linewidth}
\includegraphics[width=\textwidth]{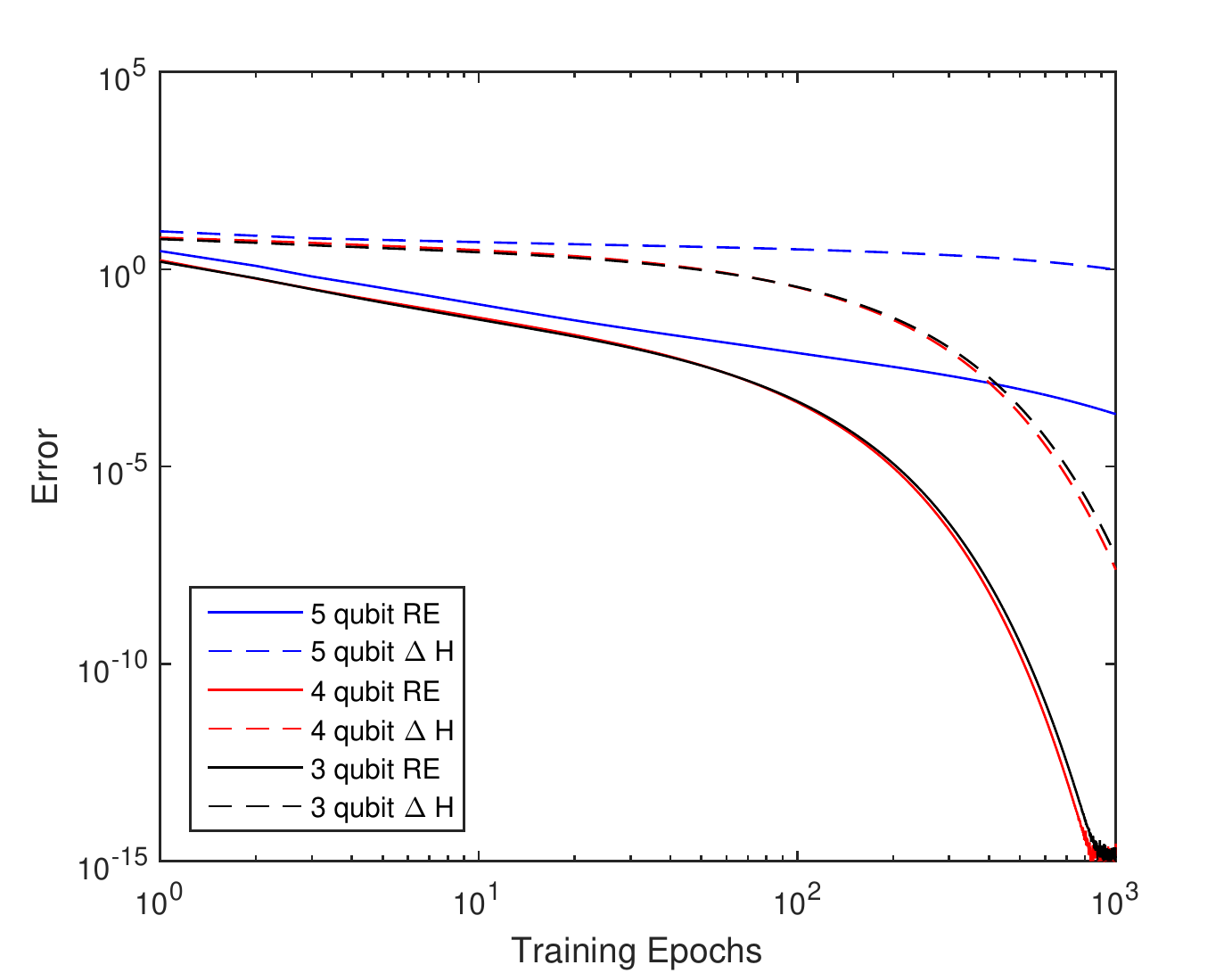}
\end{minipage}
\caption{Relative entropies and Hamiltonian errors for learning transverse Ising models.  The left figure shows data for a TI Hamiltonian with Gaussian random terms that is rescaled to unit norm.  The right figure shows the analogous situation but without normalizing the Hamiltonian.  Here $\Delta H= \|H_{\rm true} - H_{\rm est}\|_2$.\label{fig:Hlearn}}
\end{figure*}

\section{Relative Entropy Training}
In this section, we provide further numerical experiments that probe the performance of quantum relative entropy training.  The first that we consider is in~\fig{relPurity} which shows the performance of this form of training for learning randomly chosen $2$--qubit pure and mixed states.  In particular, we choose the pure states uniformly with respect to the Haar measure and pick the mixed states by generating the eigenvectors of Haar-random unitaries and choosing our mixed states to be convex combinations of such states with weights that are uniformly distributed.  

We see from these experiments that the median performance of relative entropy training on mixed states is quite good.  The quantum relative entropy is observed to shrink exponentially with the number of training epochs.  After as few as $35$ training epochs with $\eta=1$, the error is limited by numerical precision.  However, a glance at the $95\%$ confidence interval in this figure reveals that many of the examples yield much larger errors than these.  Specifically after $60$ epochs with the same learning rate the $97.5^{\rm th}$ percentile of the data in~\fig{relPurity} only has a relative entropy of $10^{-5}$ and is decaying much slower than the median.

The origin of this problem can be seen from the plot of the relative entropy for pure states in~\fig{relPurity}.  Pure states are observed to require many more training epochs to achieve the same accuracy as highly mixed states.  This is expected because pure states are only possible in the limit as $\|H\|\rightarrow \infty$.  The need to have large weights in the Hamiltonian not only means that more epochs will be needed to allow the weights to reach the magnitudes needed to approximate a pure state, but it also means that the training landscape is expected to be much more rough as we approach this limit.  This is what makes learning such pure states difficult.  Similarly, the fat tails of the error distribution for the mixed state case makes sense given that some of the data will come from nearly pure states.

The narrowing of error bars in these plots can be understood, approximately, from Levy's lemma.  Levy's lemma states that for any Lipschitz continuous function mapping the unit sphere in $2N-1$ dimensions (on which the pure states in $\mathbb{C}^{N}$ can be embedded) the probability that $f(x)$ deviates from its Haar expectation value by $\epsilon$ is in $e^{-O(N\epsilon^2)}$.  Thus if we take $f(x) = \bra{x} \sigma \ket{x}$, as we increase $N$ we expect almost all initial states $x$ chosen uniformly at random according to the Haar measure to have the widths of their confidence intervals in $O(1/\sqrt{N})\subseteq O(2^{-n/2})$, where $n$ is the number of qubits.  This means that the we expect the width of the confidence intervals to shrink exponentially with the number of qubits for cases where the target state is pure.  We do not necessarily expect similar concentrations to hold for mixed states because Levy's lemma does not directly apply in such cases.

\section{Applications to Hamiltonian Learning}
In all of the above applications our aim is to learn a Hamiltonian that parameterizes a thermal state model for the training data.  However, in some cases our aim may not be to learn a particular input state but to learn a system Hamiltonian for a thermalizing system.  Relative entropy training then allows such a Hamiltonian to be learned from the thermal expectation values of the Hamiltonian terms via gradient ascent and a simulator.  Here we illustrate this by moving away from a Hamiltonian model that is composed of a complete set of Pauli operators, to a local Hamiltonian model that lacks many of these terms.  Specifically, we choose a transverse Ising model on the complete graph:

\begin{equation}
H = \sum_j \alpha_j Z^j + \sum_j\beta_j X^j + \sum_{<i,j>} \gamma_{i,j} Z^{i} Z^j.\label{eq:HTI}
\end{equation}
We then test the ability of our training algorithm to reconstruct the true Hamiltonian given access to the requisite expectation values.

Apart from the simplicity of the transverse Ising model, it is also a useful example because in many cases these models can be simulated efficiently using quantum Monte-Carlo methods.  This means that quantum computers are not necessary for estimating gradients of models for large quantum systems.

\fig{Hlearn} shows that the ability to learn such models depends strongly on the norm of the Hamiltonian, or equivalently the inverse temperature of the thermal state.  It is much easier for us to learn a model using this method for a high temperature state than a low temperature thermal state.  The reason for this is similar to what we observed previously.  Gradient ascent takes many steps before it can get within the vicinity of the correct thermal state.  This is especially clear when we note that the error changes only modestly as we vary the number of qubits, however it changes dramatically when we vary the norm of the Hamiltonian.  This means that it takes many more training epochs to reach the region where the errors shrink exponentially from the initially chosen random Hamiltonian.  In cases where a good ansatz for the Hamiltonian is known, this process could be sped up.

\begin{figure*}[t!]
\begin{minipage}{0.49\linewidth}
\includegraphics[width=\textwidth]{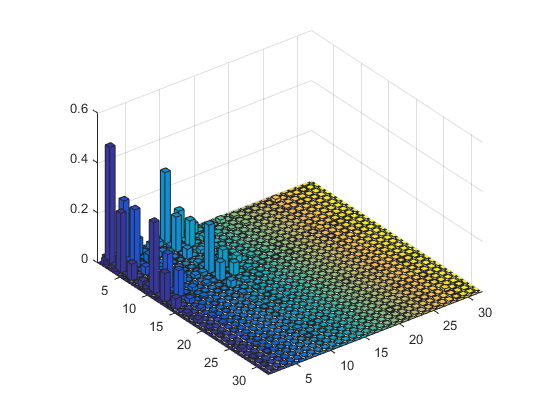}
(a) Thermal state for transverse Ising Model
\end{minipage}
\hspace{1mm}
\begin{minipage}{0.49\linewidth}
\includegraphics[width=\textwidth]{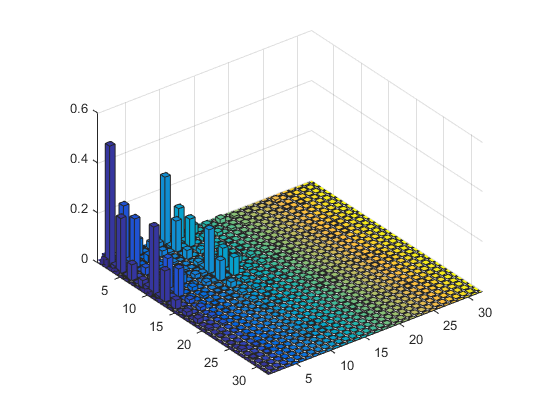}
(b) Mean-field approximation
\end{minipage}
\caption{Absolute value of mean field approximations to the thermal state of a $5$ qubit random TI Hamiltonian where each Hamiltonian term was chosen by sampling from a Gaussian with zero mean and unit variance at $\beta=1$.  The learning rate was taken to be $\eta=1$ and $100$ training epochs were used.\label{fig:MFBar}}
\end{figure*}

\begin{figure*}
\begin{minipage}{0.49\linewidth}
\includegraphics[width=\textwidth]{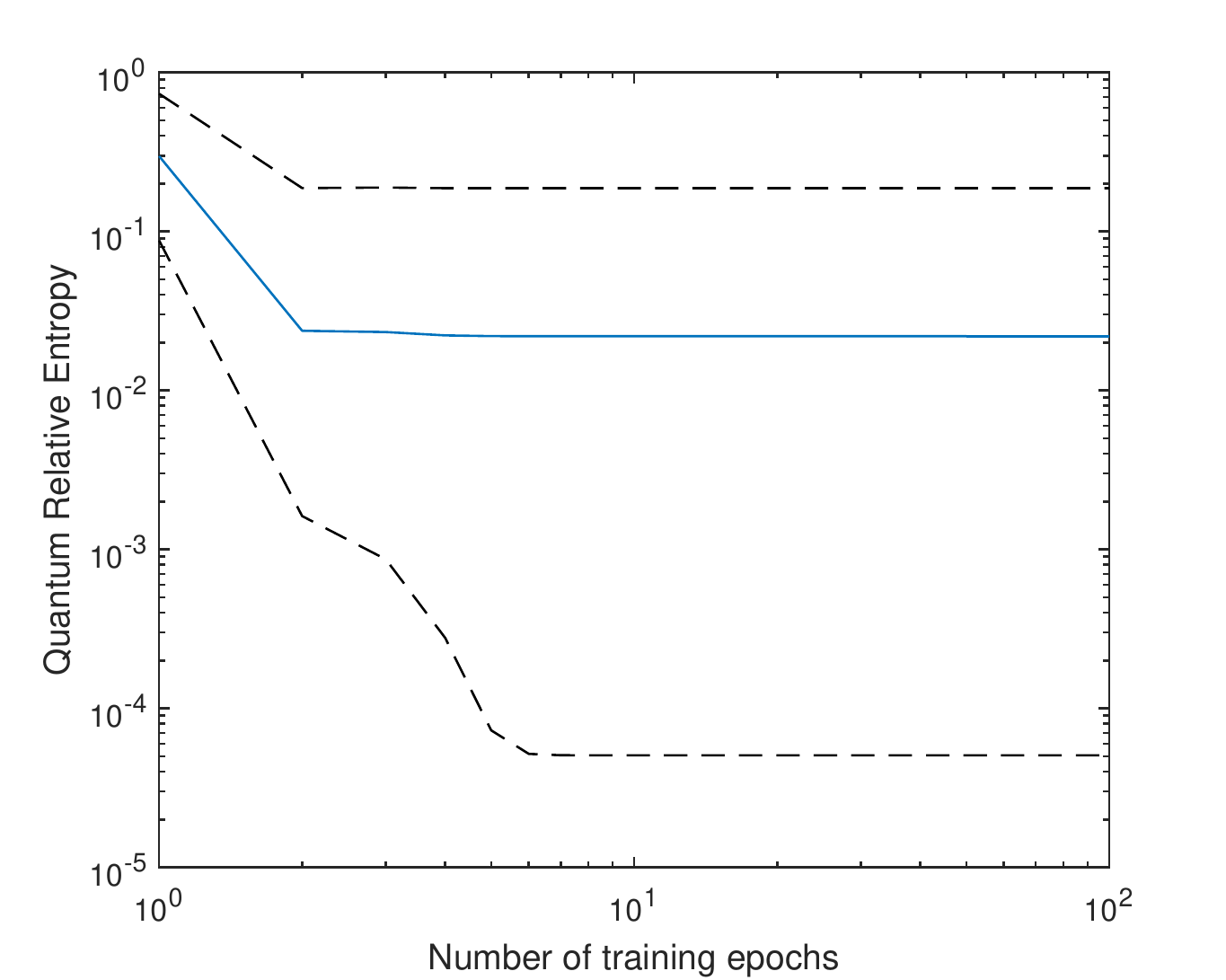}
\end{minipage}
\hspace{1mm}
\begin{minipage}{0.49\linewidth}
\includegraphics[width=\textwidth]{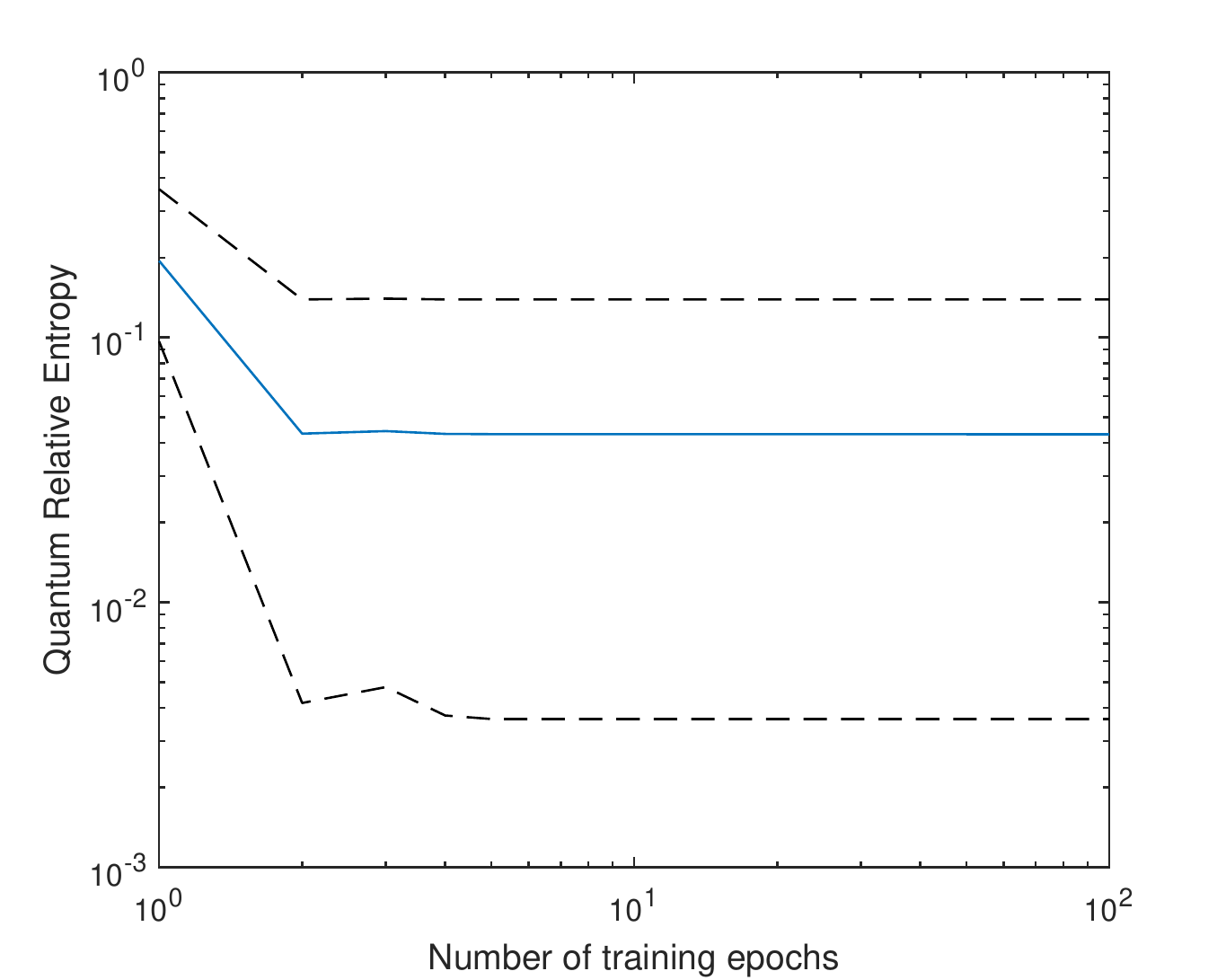}
\end{minipage}
\begin{minipage}{0.49\linewidth}
\includegraphics[width=\textwidth]{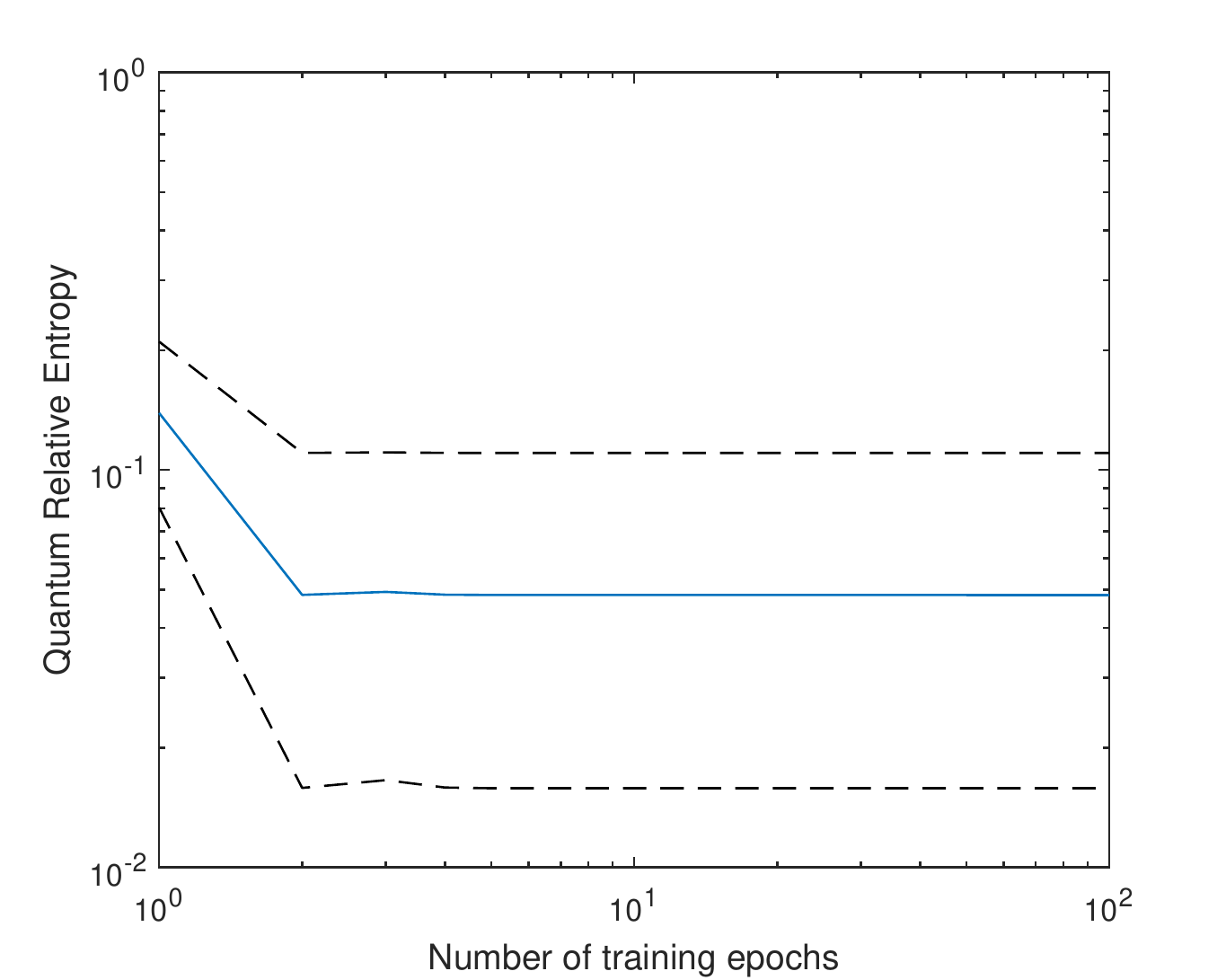}
\end{minipage}
\hspace{1mm}
\begin{minipage}{0.49\linewidth}
\includegraphics[width=\textwidth]{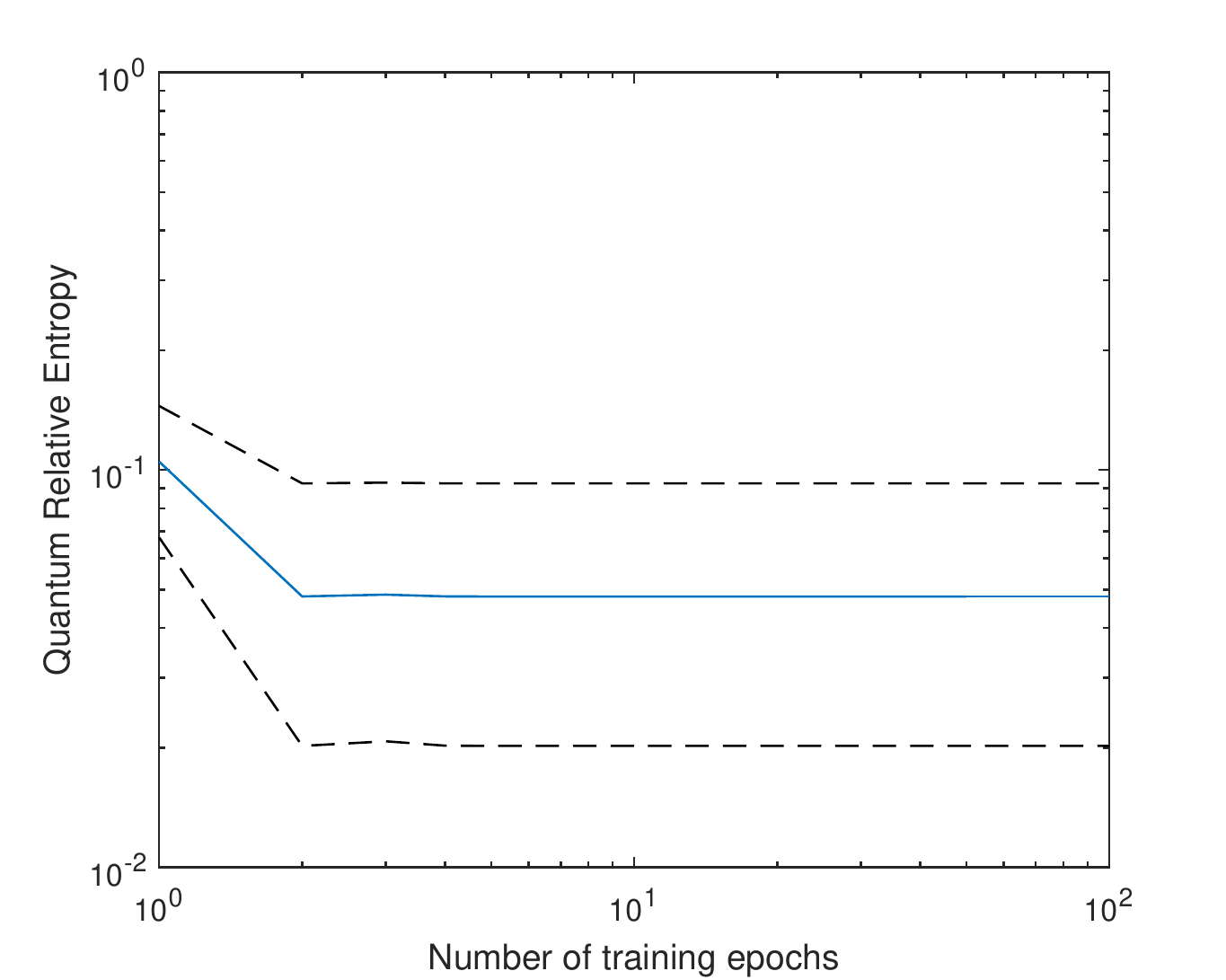}
\end{minipage}
\caption{Median relative entropies for mean-field and true distributions for thermal states generated by transverse Ising models on the complete graph with Gaussian random coefficients chosen with zero mean and unit variance for $2$ (top left), $3$ (top right), $4$ (bottom left) and $5$ (bottom right) qubits and $\eta=1$ was taken for each datum.  Dashed lines give a $95\%$ confidence interval. \label{fig:MF}}
\end{figure*}

\subsection{Mean Field Approximations}
Mean field approximations are ubiquitous in condensed matter physics.  They are relatively simple to compute for some quantum systems such as Ising models, but can be challenging for fully quantum models.  Here we provide a method to find a mean--field Hamiltonian for a system given the ability to compute moments of the density operator $\rho$.  The approach exactly follows the previous discussion, except rather than taking~\eq{HTI} we use
\begin{align}
H &= \sum_j H_j,\nonumber\\
H_j&:=\alpha_j Z^j + \beta_j X^j +\gamma_j Y^j.\label{eq:HTI}
\end{align}
Our aim is then to find vectors, $\alpha$, $\beta$ and $\gamma$ such that the correlated state $\rho$ is approximated by the uncorrelated mean field state:
\begin{equation}
\rho \approx e^{-H}/Z = \left[\prod_j e^{- H_j}\right]/Z.
\end{equation}

We see from the data in~\fig{MFBar} that relative entropy training on a thermal state that arises from a $5$ qubit transverse--Ising Hamiltonian on a complete graph for $100$ training epochs yields a mean field approximation that graphically is very close to the original state.  In fact if $\rho$ is the TI thermal state and $\sigma$ is the mean--field approximation to it then ${\rm Tr}(\rho\sigma)\approx 0.71$.  This shows that our method is a practical way to compute a mean field-approximation.

In order to assess how many epochs it takes in order to converge to a good mean-field approximation.  We see in~\fig{MF} that after only a single training epoch, the median relative entropy, over $1000$ randomly chosen instances, approximately reaches its optimal value.  Furthermore, we note that the relative entropy that the system saturates at tends to rise with the number of qubits.  This is in part due to the   fact that the Hamiltonian is on the complete graph and the weights are chosen according to a Gaussian distribution.  We therefore expect more correlated Hamiltonians as the number of qubits grows and in turn expect the mean-field approximation to be worse, which matches our observations.

If we turn our attention to learning mean--field approximations to $n$--local Hamiltonians for $n=2,\ldots,5$ we note that the mean-field approximation fails both qualitatively and quantitively to capture the correlations in the true distribution.  This is not surprising because such states are expected to be highly correlated and mean--field approximations should fail to describe them well.  These discrepancies continue even when we reduce the norm of the Hamiltonian.  This illustrates that the ability to find high--fidelity mean--field approximations depends less on the number of qubits than the properties of the underlying Hamiltonian.

\section{Commutator Training}

\begin{figure}[t!]
\includegraphics[width=\linewidth]{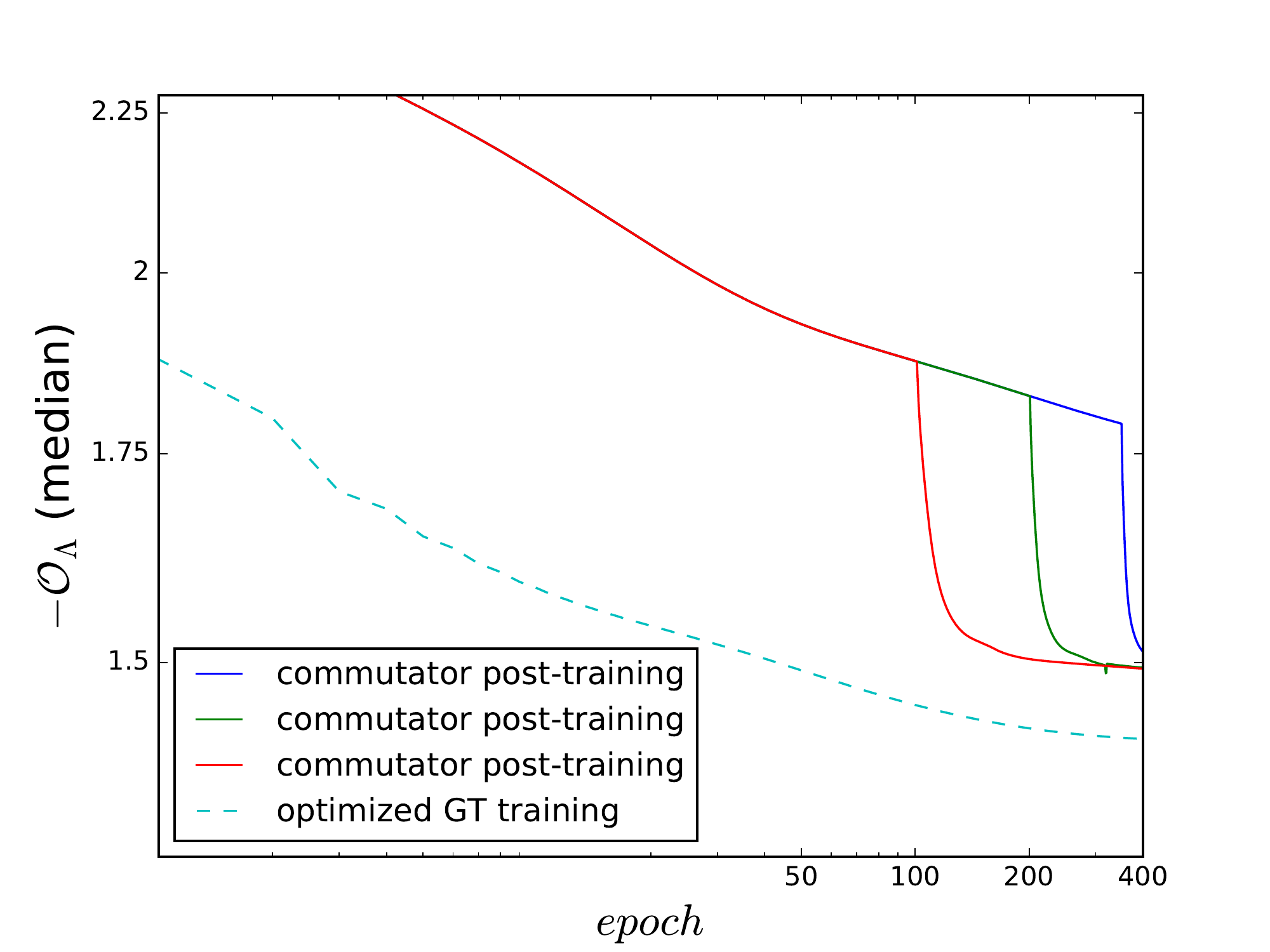}
\caption{Plot showing the efficacy of commutator training for all-visible Boltzmann machines with $4$ visible units. The top lines depict training with Golden-Thompson at 
first and then switching to commutator training where we see a sudden increase in accuracy. We picked the parameters such that the 
commutator training is stable. The bottom line (dashed) shows the performance of Golden-Thompson training with optimized learning rate and 
momentum.}\label{fig:commfig}
\end{figure}

\emph{Commutator training}-- A second approach that can be taken to avoid the use of the Golden--Thompson inequality.  The idea behind this approach is to approximate the
series in~\eq{duhamel} as a commutator series using Hadamard's Lemma.  In particular, if the Hamiltonian is a sum of bounded Hamiltonian terms then we have that~\eq{duhamel} becomes ${\rm Tr} [C e^{-H}]$ for
\begin{equation}
C:=\Lambda_v \left(\partial_\theta H+\frac{[H,\partial_\theta H]}{2!}+\frac{[H,[H,\partial_\theta H]]}{3!} + \cdots\right)
\end{equation}
Thus the gradient of the average log-likelihood becomes
\begin{align}
\sum_{\mathbf{v}}P_{\mathbf{v}} \Big( -\frac{\text{Tr}[e^{-H} C]} {\text{Tr}[e^{-H}] } + \frac{\text{Tr}[e^{-H}\partial_{\theta}H]} {\text{Tr}[e^{-H}] }  \Big)-\lambda h_{\theta} \delta_{H_\theta \in H_Q}.
\end{align}
This commutator series can be made tractable by truncating it at low order, which will not incur substantial error if $\|[H,\partial_\theta H]\|\ll 1$.  Commutator training is therefore expected to outperform Golden--Thompson training in the presence of $L_2$ regularization on the quantum terms, but is not as broadly applicable.

\begin{figure*}
\begin{minipage}{0.49\linewidth}
\includegraphics[width=\textwidth]{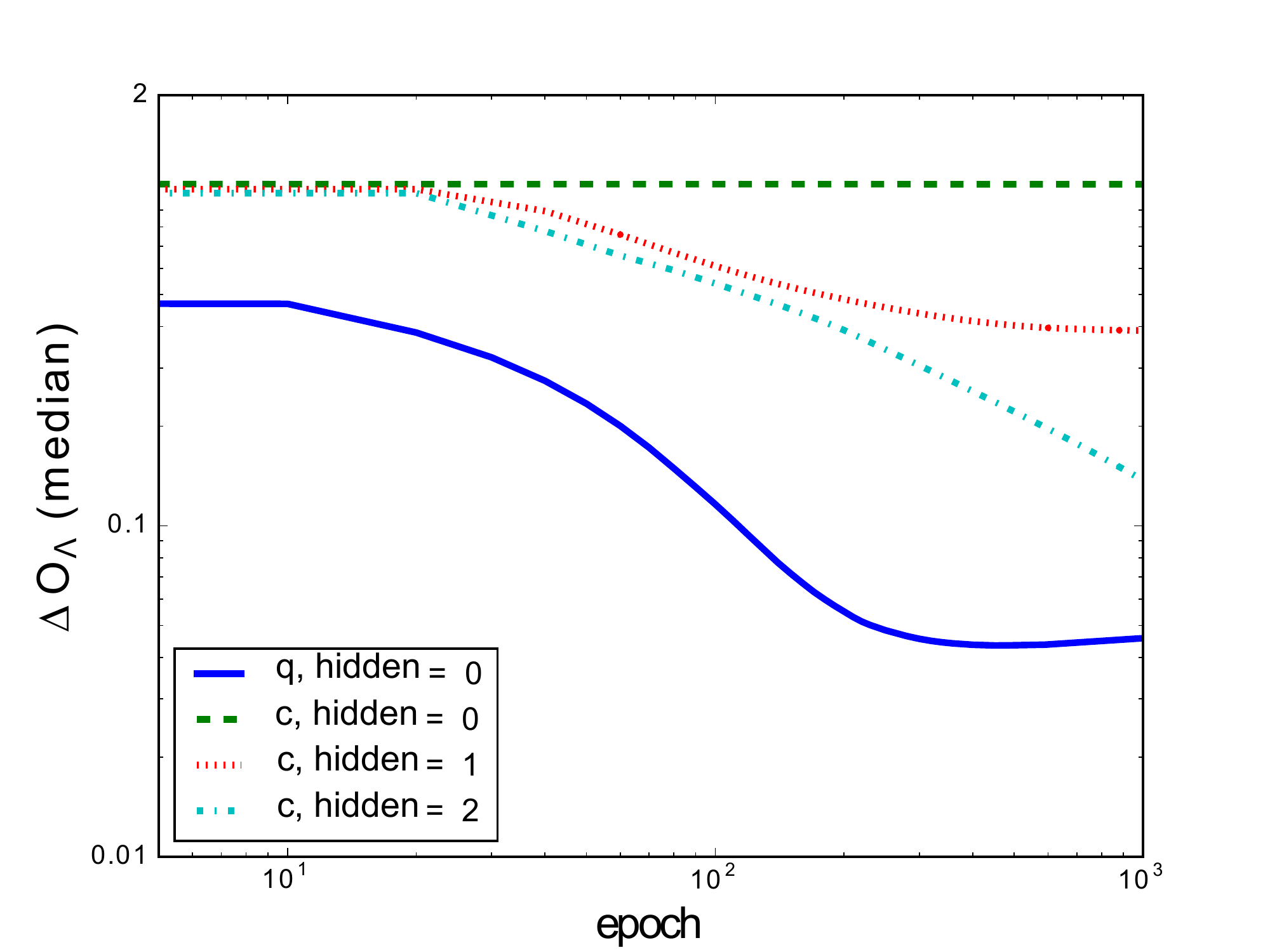}
(a)
\end{minipage}
\begin{minipage}{0.49\linewidth}
\includegraphics[width=\textwidth]{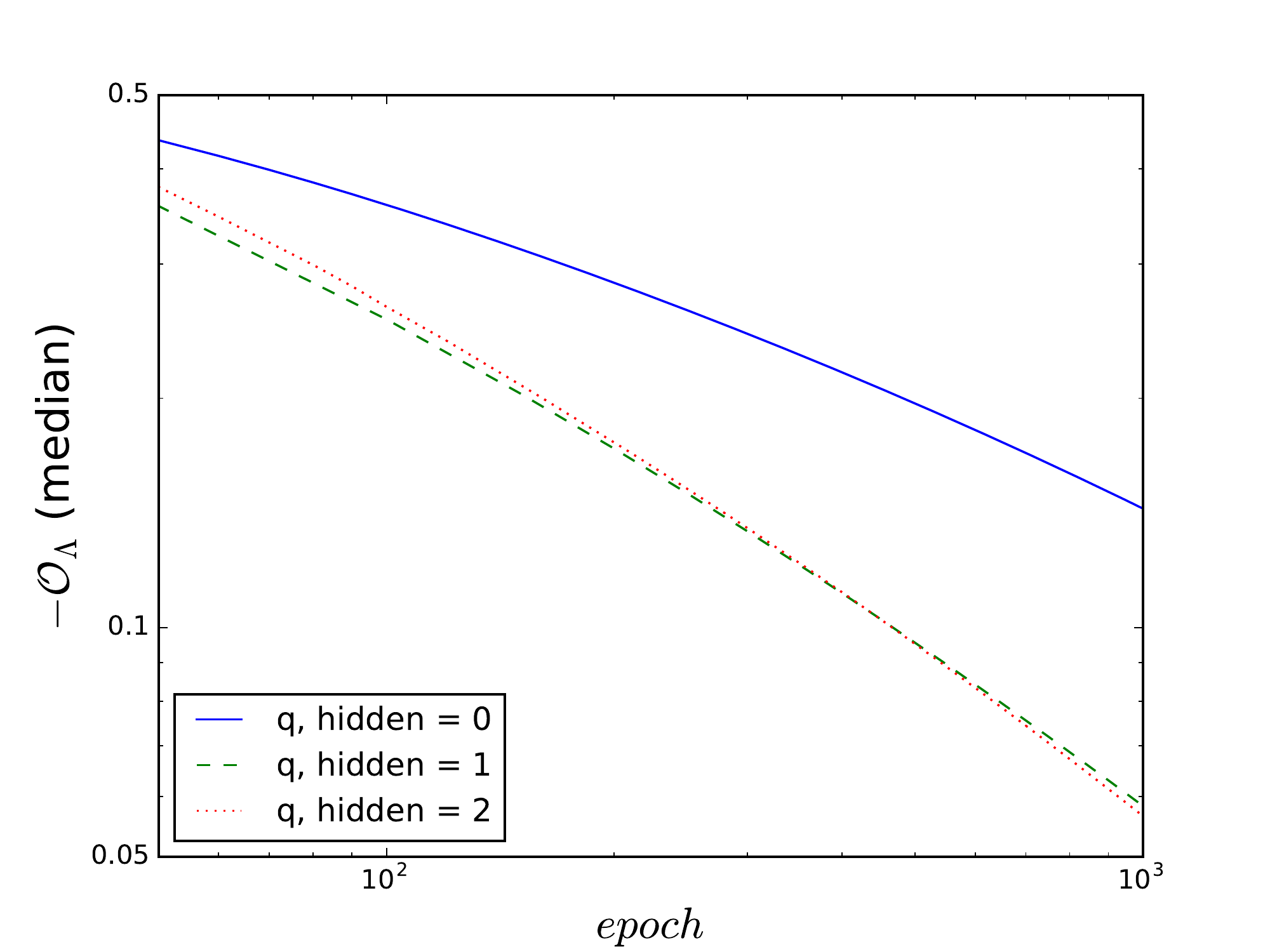}
(b)
\end{minipage}
\caption{Subfigure (a) shows the performance in 
terms of $\mathcal{O}_{\rho}$ for all-visible quantum Boltzmann machines and other parameters 
optimized for performance. In subfigure (b), we compare Boltzmann machines with 4 visible units and different number of hidden units.\label{fig:app2}}

\end{figure*}

We see in~\fig{commfig} that for a fixed learning rate that the gradients returned from a Golden-Thompson expansion are inferior to those returned from a high--order commutator expansion.  This in turn illustrates the gap between the exact gradients and Golden-Thompson gradients.  We examine this by performing Golden-Thompson trainings for an all-visible Boltzmann machine with $4$ visible units.  We train for a fixed number of epochs using the Golden-Thompson gradients and then switch to a $5^{\rm th}$--order commutator expansion.  We see a dramatic improvement in the objective function as a result.  This shows that in some circumstances much better gradients can be found with the commutator method than with Golden-Thompson; albeit at a higher price due to the fact that more expectation values need to be measured.

A drawback of the commutator method is that we find in numerical experiments that it is much less stable than Golden-Thompson.  In particular, commutator training does not fail gracefully when the expansion does not converge or when the learning rate is too large.  This means that the optimal learning rate for this form of training can substantially differ from the optimal learning rate for Golden-Thompson training.  When we optimize the learning rate for Golden--Thompson training we find that the training objective function increases by a factor of  roughly $1.5$, falling in line with the results seen using commutator training.  This shows that while commutator training can give more accurate gradients , it does not necessarily require fewer gradient steps.  In practice, the method is likely to be used in the last few training epochs after Golden--Thompson training, or other forms of approximate training, reach a local optima.

\section{Additional Experiments for POVM-based Training}
While the numerics in the main body provided a glimpse of the ability of POVM-based training to learn general Hamiltonian models, we provide a few additional experiments here to look at the performance of the training algorithm for different sizes of Fermionic Boltzmann machines.  We first examine the performance of the algorithm as a function of the number of hidden units for a $6$ visible unit example in~\fig{app2}.  We note here that while we can increase the number of hidden units in the classical model to help improve the objective function,

We see from~\fig{app2} that the inclusion of hidden units can have a dramatic improvement on the classical model's ability to learn.  In the quantum case we see that even the all-visible model outperforms each of the classical cases considered.  Adding a single hidden unit does substantially help for a $4$ visible unit model in the quantum case, but additional hidden units do not provide the quantum Boltzmann machine with much greater power for this training set.  This vindicates that the idea of deep learning still has a role for these quantum models despite the fact that the POVM is a projector onto a pure state and its compliment.  However, the lack of systematic improvements observed for larger instances suggest that the correlations present in the training data can be easily represented using the $H_{pqrs}$ terms present in the Fermionic Hamiltonian, since the impact of such terms is greatly diminished in the $4$ visible unit case.  More work will be needed in order to systematically study the role that hidden units play in deep learning for Fermionic Boltzmann machines and related models.

\section{Complexity Analysis}
We assume the following cost model here.  We assume that we have an oracle, $F_{H}(\epsilon_H)$, that is capable of taking the weights and biases of the quantum Boltzmann machine (or equivalently a parameterization of $H$) and outputs the state $\sigma$ such that $\|\sigma -e^{-H}/Z\| \le \epsilon_H$ for $\epsilon_H\ge 0$.  We manifestly assume that the state preparation is not exact because any computational model that grants the ability to prepare exact Gibbs states for arbitrary Hamiltonians is likely to be more powerful than quantum computing under reasonable complexity theoretic assumptions.  For relative entropy training, we also assume that the training data $\rho$ is provided by a query to an auxiliary oracle $F_{\rho}$.  We cost both oracles equivalently.  Finally, we assume for POVM training that the POVM elements can be prepared with a constant sized circuit and do not assign a cost to implementing such a term.  We do this for two reasons.  First for most elementary examples the POVM elements are very simple projectors and are not of substantially greater complexity than implementing a Hamiltonian term.  The second is that incorporating a cost for them would necessitate opening the blackbox $F_{H}$ which would substantially complicate our discussion and force us to specialize to particular state preparation methods.

The first result that we show is a lower bound based on tomographic bounds that shows that quantum Boltzmann training cannot be efficient in general if we wish to provide a highly accurate generative model for the training data.

\begin{lemma}
The number of queries to $F_{\rho}$ which yields copies of rank $r$ state operator $\rho\in \mathbb{C}^{D\times D}$ required to train an arbitrary quantum Boltzmann machine using relative entropy such that the quantum state generated by the Boltzmann machine are within trace distance $\epsilon\in (0,1)$ of $\rho$, and with failure probability $\Theta(1)$, is in $\Omega(Dr/[\epsilon^2\log(D/r\epsilon)])$.
\end{lemma}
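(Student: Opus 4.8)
\emph{Proof strategy.} The plan is to show that a successful run of relative entropy training is, in disguise, a quantum state tomography protocol, and then to import the known copy-complexity lower bound for tomography of low-rank states. In other words, the content of the lemma is that relative entropy training cannot beat the information-theoretic cost of tomography, so no clever choice of Hamiltonian ansatz or learning rate can help.

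First I would pin down what the training procedure actually outputs. For the statement to be non-vacuous the Boltzmann machine must be expressive enough to reach trace distance $\epsilon$ for a worst-case rank-$r$ target $\rho$; the canonical choice, exactly as in the tomographic experiments described above, is to let $H$ range over all real linear combinations of the $D^2$ Pauli operators on $\log_2 D$ qubits, for which every state on $\mathbb{C}^{D\times D}$ is of the form $e^{-H}/Z$. A run of gradient ascent on $\mathcal{O}_{\rho}$ then terminates with a parameter vector specifying such an $H$, i.e. with an explicit classical description of the density operator $e^{-H}/Z$ which, on success, satisfies $\norm{e^{-H}/Z - \rho}\le\epsilon$ in trace norm. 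Crucially, the training touches the unknown state only through the copies returned by $F_\rho$: the oracle $F_H$ does not involve $\rho$, and the cost model explicitly forbids prior knowledge of $\rho$. Hence, viewed as a black box, relative entropy training with $N$ queries to $F_\rho$ is precisely an algorithm that consumes $N$ copies of an unknown rank-$r$ state and returns a classical estimate within trace distance $\epsilon$ with constant success probability --- that is, a state tomography protocol for rank-$r$ states.

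Second, I would invoke the sample-optimal tomography lower bound of Haah, Harrow, Ji, Wu and Yu: any protocol that reconstructs an arbitrary rank-$r$ state on a $D$-dimensional system to within trace distance $\epsilon$ with $\Theta(1)$ success probability must consume $\Omega(Dr/[\epsilon^2\log(D/r\epsilon)])$ copies, and this holds even when arbitrary (including collective and adaptive) measurements on the copies are allowed. Since a lower bound that binds every strategy in particular binds the rather restricted strategy realized by gradient ascent on the Hamiltonian parameters, we conclude $N\in\Omega(Dr/[\epsilon^2\log(D/r\epsilon)])$, as claimed.

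The only genuine bookkeeping is to reconcile the two error budgets: the state physically emitted by the machine is $\sigma$ with $\norm{\sigma - e^{-H}/Z}\le\epsilon_H$, so the classical description we extract certifies closeness $\epsilon+\epsilon_H$ to $\rho$ rather than exactly $\epsilon$; taking $\epsilon_H\le\epsilon$ (or absorbing it into a constant rescaling) leaves the asymptotics unchanged. I expect this, together with the routine check that the reduction leaks no information about $\rho$ beyond the copies themselves, to be the only fine print; the one step worth stating carefully is the conceptual reduction --- that relative entropy training produces a generative, classical description of $\rho$ --- since everything quantitative is then inherited wholesale from the tomography lower bound.
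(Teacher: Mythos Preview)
Your approach is essentially the paper's: reduce to the Haah--Harrow--Ji--Wu--Yu tomography lower bound by observing that a successful training run outputs a classical description of a state $\epsilon$-close to $\rho$. The paper frames it as a contradiction, but the content is identical.

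There is one genuine slip to fix. You write that with $H$ ranging over all Pauli combinations ``every state on $\mathbb{C}^{D\times D}$ is of the form $e^{-H}/Z$.'' This is false precisely for the rank-deficient states the lemma is about: $e^{-H}$ is positive definite for any bounded Hermitian $H$, so a rank-$r$ state with $r<D$ is never exactly a Gibbs state. The paper spends most of its proof on exactly this point, constructing an explicit $\epsilon$-approximation by writing $\rho \propto (\openone-P_0)e^{-\tilde H}(\openone-P_0)$ on the support of $\rho$ and then replacing the null-space projector $\openone-P_0$ by $e^{-\gamma P_0}$ with $\gamma\in O(\log(D/\epsilon))$, yielding a bona fide Gibbs state within trace distance $\epsilon$ and with $\|H\|\in O(\|\tilde H\|+\log(D/\epsilon))$. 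Your reduction only needs $\epsilon$-approximability, not exact representability, and you already stated that requirement correctly in the preceding sentence; the fix is just to replace the incorrect claim with this approximation argument (or equivalently note that $H_\delta=-\log(\rho+\delta\openone/D)$ for small $\delta$ does the job). Once patched, your bookkeeping on $\epsilon_H$ and the no-side-information check are exactly right and the argument goes through.
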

\begin{proof}
The proof follows by contradiction.  Since we have assumed an arbitrary quantum Boltzmann machine we will consider a Boltzmann machine that has a complete set of Hamiltonian terms.  If we do not make this assumption then there will be certain density operators that cannot be prepared within error $\epsilon$ for all $\epsilon>0$.  Let us assume that $\rho$ is rank $D$ if this is true then there exists $H\in \mathbb{C}^{D\times D}$ such that $\rho \propto e^{-H}$ because the matrix logarithm is well defined for such systems.  

Now let us assume that $\rho$ has rank less than $D$.  If that is the case then there does not exist $H\in \mathbb{C}^{D\times D}$ such that $\rho \propto e^{-H}$, but $\rho$ can be closely approximated by it.  Let $P_0$ be a projector onto the null space of $\rho$, which we assume is $D-r$ dimensional.  Then let $\tilde{\rho}\in\mathbb{C}^{r\times r}$ be the projection of $\rho$ onto the orthogonal compliment of its null space.  Since $\rho$ is maximum rank within this subspace, there exists $\tilde{H}\in\mathbb{C}^{r\times r}$ such that $\tilde{\rho} \propto e^{-\tilde{H}}$.
After a trivial isometric extension of $\tilde{H}$ to $\mathbb{C}^{D\times D}$, we can then write $\rho \propto (\openone -P_0)e^{-\tilde{H}}(\openone -P_0)$.  By construction $[\tilde{H},(\openone -P_0)]=0$, and thus $\rho \propto (\openone -P_0)e^{-\tilde{H}}= (\openone -P_0)e^{-(\openone -P_0)\tilde{H}(\openone -P_0)}$.  

The definition of the trace norm implies that for any $\gamma>0$, $\|(\openone - P_0) - e^{-\gamma P_0}\|_{1}\in O([D-r]e^{-\gamma})$.  Thus because $e^{-(\openone -P_0)\tilde{H}(\openone -P_0)}/Z$ has trace norm $1$
\begin{align}
\rho& = e^{-\gamma P_0}e^{-(\openone -P_0)\tilde{H}(\openone -P_0)}/Z +O([D-r]e^{-\gamma})\nonumber\\
&=e^{-(\openone -P_0)\tilde{H}(\openone -P_0)-\gamma P_0}/Z +O([D-r]e^{-\gamma}).
\end{align}
Thus $\rho$ can be approximated within error less than $\epsilon$, regardless of its rank, by a Hermitian matrix whose norm scales at most as $O(\|\tilde{H}\|+\log(D/\epsilon))$.  Thus for every $\epsilon>0$ there exists a quantum Boltzmann machine with a complete set of Hamiltonian terms that can approximate $\rho$ within trace distance less than $\epsilon$ using a bounded Hamiltonian.

Haah, Harrow et al show in Theorem 1 of~\cite{haah2015sample} that $\Omega(Dr/[\epsilon^2\log(D/r\epsilon)])$ samples are needed to tomographically reconstruct a rank $r$ density operator $\rho \in \mathbb{C}^{D\times D}$ within error $\epsilon$ in the trace distance.  Since training a Boltzmann machine can provide a specification of an arbitrary density matrix, to within trace distance $\epsilon$, if this training process required $\omega(Dr/[\epsilon^2\log(D/r\epsilon)])$ samples we would violate their lower bound on tomography.  The result therefore follows.
\end{proof}

\begin{lemma}
There does not exist a general purpose POVM-based training algorithm for quantum Boltzmann machines on a training set such that $|\{P_v: P_v >0\}|=N$ can prepare a thermal state such that
${\rm Tr}([\sum_v P_v \Lambda_v] e^{-H} /Z) \ge 1/\Delta$ that requires $M$ queries to $P_v$ where $\Delta \in o(\sqrt{N})$ and $M\in o(\sqrt{N})$. 
\end{lemma}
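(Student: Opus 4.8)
The plan is to reduce unstructured search to POVM--based training, so that the optimality of Grover search (the $\Omega(\sqrt N)$ query lower bound of Bennett--Bernstein--Brassard--Vazirani) forbids the claimed efficiency. I would fix one POVM for every instance, $\Lambda_v=\ketbra{v}{v}$ for $v\in\{1,\dots,N\}$ together with $\Lambda_0=\openone-\sum_{v\ge1}\ketbra{v}{v}$, and hide a ``spike'' in the training distribution: for a small fixed constant $\delta>0$, let instance $j$ have $P^{(j)}_j=1-\delta$ and $P^{(j)}_v=\delta/(N-1)$ for the remaining $v\in\{1,\dots,N\}$, so that $|\{v:P^{(j)}_v>0\}|=N$ as the hypothesis requires. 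A single query to the oracle for $P^{(j)}$ at a point $v$ reveals whether $v=j$, hence is no more powerful than one query to a search oracle that marks $j$; following the paper's cost model I charge nothing for invoking $F_H$ or for measuring its output.

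First I would verify that the target value is attainable, so the statement is not vacuous. The ideal diagonal state already works,
\[
{\rm Tr}\!\left[\Big(\sum_v P^{(j)}_v\Lambda_v\Big)\ketbra{j}{j}\right]=1-\delta\ \ge\ \frac{1}{\Delta},
\]
and a Hamiltonian $H$ with a large diagonal penalty favouring the string $j$ drives $e^{-H}/Z$ arbitrarily close to $\ketbra{j}{j}$ in trace distance. Conversely, since $\sum_v P^{(j)}_v\Lambda_v\preceq\ketbra{j}{j}+\delta\,\openone$, if the (approximate) Gibbs state $\sigma$ returned by the algorithm has objective value at least $1/\Delta$ then $\bra{j}\sigma\ket{j}\ge 1/\Delta-\delta-\epsilon_H\ge 1/(2\Delta)$ once $\delta$ and the state--preparation error $\epsilon_H$ are taken small compared with $1/\Delta$.

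Now the lower bound proper. Suppose some algorithm $\mathcal{A}$ made $M\in o(\sqrt N)$ queries and, for every admissible training set, output a description of $H$ whose Gibbs state met the objective with probability at least a fixed constant $c$. Run $\mathcal{A}$ on a ``spike--free'' base distribution chosen to differ from each $P^{(j)}$ at only $O(1)$ symbols, so that the textbook hybrid argument applies: averaged over $j$, the total--variation distance between the output of $\mathcal{A}$ on instance $j$ and on the base instance is $O(M/\sqrt N)=o(1)$, hence for a $1-o(1)$ fraction of $j$ the two output distributions are $o(1)$--close. On the other hand, any fixed density operator $\sigma$ obeys $\sum_{j=1}^{N}\bra{j}\sigma\ket{j}\le 1$, so $\bra{j}\sigma\ket{j}\ge 1/(2\Delta)$ for at most $2\Delta$ values of $j$; averaging over the (spike--independent) output of $\mathcal{A}$ on the base instance, the fraction of $j$ for which that output has $\bra{j}\sigma\ket{j}\ge 1/(2\Delta)$ with probability exceeding $c/2$ is $O(\Delta/N)=o(1)$, using $\Delta\in o(\sqrt N)$. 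Intersecting the two $1-o(1)$ sets of ``good'' $j$ leaves an index $j^{\star}$ on which $\mathcal{A}$, behaving essentially as on the base instance, attains $\bra{j^\star}\sigma\ket{j^\star}\ge 1/(2\Delta)$ — a necessary condition for meeting the objective on $P^{(j^\star)}$ — with probability less than $c$, contradicting the assumed success probability. Equivalently, such an $\mathcal{A}$ would locate the marked element with $o(\sqrt N)$ oracle calls, which is impossible.

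I expect the main obstacle to be bookkeeping rather than conceptual. One must (i) pin down the query model for $P_v$ so that a $P^{(j)}$--query really is comparable to a search--oracle query (coherent access to the map $v\mapsto P^{(j)}_v$ suffices), (ii) carry the state--preparation error $\epsilon_H$ of the black box $F_H$ and the constant success probability $c$ through every inequality with enough slack, choosing $\delta$, $\epsilon_H$ and the hybrid tolerance small relative to $c$ and $1/\Delta$, and (iii) arrange the base distribution and the family $\{P^{(j)}\}$ so they differ at only a constant number of symbols, which is what lets the standard hybrid/adversary bound be applied verbatim; comparing each $P^{(j)}$ directly to a uniform distribution would instead require a mild generalization of that bound to oracles that differ everywhere.
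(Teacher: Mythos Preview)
Your proposal is correct and takes essentially the same route as the paper: embed an unstructured--search instance in the training distribution and invoke the $\Omega(\sqrt N)$ Grover/BBBV lower bound. The paper's argument is terser---it simply sets $P_v=\delta_{v,j}$ so that a query to $P_v$ \emph{is} a Grover oracle call, then notes that sampling from the trained Gibbs state plus one verification query finds $j$ in $o(\sqrt N)$ total queries---whereas you perturb to a full--support spike distribution (so that $|\{v:P_v>0\}|=N$ holds as literally stated), check attainability, carry $\epsilon_H$ through, and run the hybrid argument explicitly; this extra bookkeeping is sound and in fact patches a small inconsistency in the paper's own proof, which sets only one $P_v$ nonzero.
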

\begin{proof}
The proof naturally follows from reducing Grover's search to Boltzmann training.  We aim to use queries to the blackbox oracle to learn a whitebox oracle that we can query to
learn the marked state without actually querying the original box.  To be clear, let us pick $\Lambda_0 = \ketbra{0}{0}$ and $P_1=1$ and for $v>1$, $\Lambda_v = \ketbra{v}{v}$ with $P_v=0$.  These elements form a POVM because they are positive and sum to the identity.

In the above construction the oracle that gives the $P_v$ is equivalent to the Grover oracle.  This implies that a query to this oracle is the same as a query to Grover's oracle.

Now let us assume that we can train a Boltzmann machine such that ${\rm Tr}(\Lambda_0 e^{-H}/Z)\in \omega(1/\sqrt{N})$ using $o(\sqrt{N})$ queries to the blackbox.  This implies that $o(\sqrt{N})$ queries are needed on average to prepare $\ket{0}$ by drawing samples from the BM and verifying them using the oracle.  Since the cost of learning the BM is also $o(\sqrt{N})$, this implies that the number of queries needed in total is $o(\sqrt{N})$.  Thus we can perform quantum search under these assumptions using $o(\sqrt{N})$ queries and hence from lower bounds this implies $o(\sqrt{N}) \subseteq \Omega(\sqrt{N})$ which is a contradiction.
\end{proof}

The above lemmas preclude general efficient Boltzmann training without further assumptions about the training data, or without making less onerous requirements on the precision of the BM model output by the training algorithm.  This means that we cannot expect even quantum Boltzmann machines have important limitations that need to be considered when we examine the complexity of quantum machine learning algorithms.

\begin{theorem}
Let $H=\sum_{j=1}^M \theta_j H_j$ with $\|H_j\|=1~\forall~j$ be the Hamiltonian for a quantum Boltzmann machine and let $G$ be an approximation to $\nabla \mathcal{O}$ where $\mathcal{O}$ that is the training objective function for either POVM based or relative entropy training.  There exist training algorithms such that at each of the $N_{\rm epoch}$ epochs $\mathbb{E}(\|G - G_{\rm true}\|_2^2)\le \epsilon^2$ and query $F_H$ and the training set a number of times that is in $$O\left(N_{\rm epoch} \left(\frac{M^2}{\epsilon^2} \right)\right).$$
\end{theorem}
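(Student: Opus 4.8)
The plan is to exhibit a stochastic-gradient estimator that is unbiased up to a controllable Gibbs-state preparation error and to bound its mean-squared error by a direct variance computation. First I would write out the exact gradient components. Since $H=\sum_{j=1}^M\theta_j H_j$ we have $\partial_{\theta_j}H=H_j$, so in the relative entropy case $(G_{\rm true})_j = -{\rm Tr}[\rho H_j] + {\rm Tr}[e^{-H}H_j]/Z - \lambda\theta_j\delta_{H_j\in H_Q}$, and in the Golden--Thompson (POVM) case $(G_{\rm true})_j = \sum_v P_v\!\left(-{\rm Tr}[e^{-H_v}H_j]/{\rm Tr}[e^{-H_v}] + {\rm Tr}[e^{-H}H_j]/Z\right) - \lambda\theta_j\delta_{H_j\in H_Q}$ with $H_v=H-\log\Lambda_v$. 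In both cases every component is an affine combination, with known or $P_v$-sampled coefficients, of expectation values of the unit-norm Hermitian operators $H_j$ in Gibbs states that $F_H$ can prepare (the $-\log\Lambda_v$ shift being folded into $F_H$, consistent with the cost model), plus the regularization term, which is computed exactly from the current parameters at no query cost.

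Second I would estimate these expectation values by sampling. At a fixed epoch, for relative entropy training I would draw $n_s$ copies of $\rho$ from $F_\rho$ and, for each $j$, $n_s$ copies of a state $\sigma$ with $\|\sigma - e^{-H}/Z\|\le\epsilon_H$ from $F_H$, and measure $H_j$ on the $j$-th batch; since $\|H_j\|=1$ every outcome lies in $[-1,1]$, each batch mean is unbiased with variance at most $1/n_s$, so $G_j$, being a difference of two such means plus the exact regularization term, satisfies ${\rm Var}(G_j)\le 2/n_s$. For Golden--Thompson training the same scheme applies after first drawing $v\sim P_v$: the pair ``sample $v$; measure $H_j$ on a fresh copy of $e^{-H_v}/Z_v$'' is a single-shot unbiased estimator of $\sum_v P_v{\rm Tr}[e^{-H_v}H_j]/{\rm Tr}[e^{-H_v}]$ whose outcome again lies in $[-1,1]$, hence has variance at most $1$.

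Third comes the bookkeeping. Decomposing $\mathbb{E}\|G - G_{\rm true}\|_2^2 = \|\mathbb{E}G - G_{\rm true}\|_2^2 + \sum_{j=1}^M {\rm Var}(G_j)$, the bias term stems only from $\sigma$ (and, in the POVM case, $e^{-H_v}/Z_v$) being $\epsilon_H$-accurate rather than exact; by H\"older's inequality $|{\rm Tr}[(\sigma - e^{-H}/Z)H_j]|\le \|H_j\|\,\|\sigma - e^{-H}/Z\|_1$, so choosing $\epsilon_H$ polynomially small in $\epsilon$, $M$ and the dimension makes the bias contribution at most $\epsilon^2/2$, and this costs nothing in the query count since one call to $F_H$ is one query irrespective of its internal precision. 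The variance term obeys $\sum_j {\rm Var}(G_j)\le 2M/n_s$, so $n_s = O(M/\epsilon^2)$ drives it below $\epsilon^2/2$ and the total mean-squared error below $\epsilon^2$. Each of the $M$ components consumes $O(n_s)$ copies from $F_H$ and $O(n_s)$ samples from the training set, giving a per-epoch cost of $O(M n_s) = O(M^2/\epsilon^2)$ queries to each oracle; since the guarantee in the statement is per epoch, multiplying by $N_{\rm epoch}$ yields the claimed $O(N_{\rm epoch}(M^2/\epsilon^2))$.

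The main obstacle is not a difficult estimate but care in two places. One is to identify $G_{\rm true}$ with the gradient of the objective that is \emph{actually} optimized: for POVM training this is the Golden--Thompson lower bound, whose gradient involves $H_v=H-\log\Lambda_v$, rather than the $\#\P$--hard average log--likelihood itself. The other is to confirm that the bias from inexact Gibbs preparation is suppressible without inflating the query count, which hinges on the observation that $\epsilon_H$ enters only the internal cost of $F_H$ (see the appendix on preparing thermal states) and not the number of oracle calls.
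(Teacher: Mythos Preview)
Your proposal is correct and follows essentially the same route as the paper: bound each component's variance by $O(1/n_s)$ using $\|H_j\|=1$, sum over the $M$ components to get $O(M/n_s)$, choose $n_s=O(M/\epsilon^2)$, and multiply by $M$ components and $N_{\rm epoch}$ epochs. The paper's own proof is in fact less careful than yours: it simply declares the estimator unbiased and writes $\mathbb{E}(\|G-G_{\rm true}\|_2^2)=\sum_j\mathbb{V}(G_j)$ directly, never addressing the $\epsilon_H$ bias from $F_H$ that you handle explicitly via the bias--variance decomposition and H\"older's inequality.
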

\begin{proof}
We show the proof by considering the approximate gradients given by the methods in the main body.  Since each of those methods uses sampling the result will follow from straight forward estimates of the variance.  Consider an unbiased estimator of the mean such as the sample mean.  Since such an estimate is unbiased it satisfies $\mathbb{E}(G)  = G_{\rm true}$.  Thus \begin{align}
\mathbb{E}(\|G-G_{\rm true}\|_2^2) &= \mathbb{E}(\|G-\mathbb{E}(G)\|_2^2)\nonumber\\
&=\sum_{j=1}^M \mathbb{E}\left((G_j-\mathbb{E}(G_j))^2\right)\nonumber\\
&=\sum_{j=1}^M \mathbb{V}\left(G_j\right).
\end{align}
For relative entropy training under the assumption that $\|H_j\|\le 1$ for all $j$
\begin{equation}
\mathbb{V}(G_j) \in O(\max\{{\rm Tr}(\rho H_j), {\rm Tr}(H_j e^{-H}/Z) \}/n)\le 1/n.
\end{equation}
Similarly for POVM training
\begin{align}
\mathbb{V}(G_j) &\in O(\max\{{\rm Tr}\big(H_j\sum_v \frac{P_v e^{-H_v}}{Z_v}), {\rm Tr}(H_j \frac{e^{-H}}{Z_v}\big)\}/n)
\nonumber\\
&\in O(1/n).
\end{align}
Therefore
\begin{equation}
\sum_{j=1}^M \mathbb{V}\left(G_j\right) \in O(M/n).
\end{equation}
Thus if we wish to take the overall variance to be $\epsilon^2$ it suffices to take $n=M/\epsilon^2$.  Each of these $n$ samples requires a single preparation of a thermal state and or a query to the training data.  Thus for both training algorithms considered the number of queries needed to compute a component of the gradient is $O(n)$.  Since there are $M$ components the total number of queries needed to estimate the gradient is in
\begin{equation}
O(nM) \subseteq O(M^2/\epsilon^2).
\end{equation}
The result then follows from the assumption that the algorithm makes $N_{\rm epoch}$ gradient steps.
\end{proof}

\bibliographystyle{unsrt}

\begin{thebibliography}{10}

\bibitem{hinton2002training}
Geoffrey~E Hinton.
\newblock Training products of experts by minimizing contrastive divergence.
\newblock {\em Neural computation}, 14(8):1771--1800, 2002.

\bibitem{denil2011toward}
Misha Denil and Nando De~Freitas.
\newblock Toward the implementation of a quantum rbm.
\newblock In {\em NIPS Deep Learning and Unsupervised Feature Learning
  Workshop}, volume~5, 2011.

\bibitem{adachi2015application}
Steven~H Adachi and Maxwell~P Henderson.
\newblock Application of quantum annealing to training of deep neural networks.
\newblock {\em arXiv preprint arXiv:1510.06356}, 2015.

\bibitem{benedetti2015estimation}
Marcello Benedetti, John Realpe-G{\'o}mez, Rupak Biswas, and Alejandro
  Perdomo-Ortiz.
\newblock Estimation of effective temperatures in a quantum annealer and its
  impact in sampling applications: A case study towards deep learning
  applications.
\newblock {\em arXiv preprint arXiv:1510.07611}, 2015.

\bibitem{wiebe2016quantum}
Nathan Wiebe, Ashish Kapoor, and Krysta~M Svore.
\newblock Quantum deep learning.
\newblock {\em Quantum Information and Computation}, 16:0541--0587, 2016.

\bibitem{wiebe2015quantum}
Nathan Wiebe, Ashish Kapoor, Christopher Granade, and Krysta~M Svore.
\newblock Quantum inspired training for boltzmann machines.
\newblock {\em arXiv preprint arXiv:1507.02642}, 2015.

\bibitem{amin2016quantum}
Mohammad~H Amin, Evgeny Andriyash, Jason Rolfe, Bohdan Kulchytskyy, and Roger
  Melko.
\newblock Quantum boltzmann machine.
\newblock {\em arXiv preprint arXiv:1601.02036}, 2016.

\bibitem{poulin2009sampling}
David Poulin and Pawel Wocjan.
\newblock Sampling from the thermal quantum gibbs state and evaluating
  partition functions with a quantum computer.
\newblock {\em Physical review letters}, 103(22):220502, 2009.

\bibitem{yung2012quantum}
Man-Hong Yung and Al{\'a}n Aspuru-Guzik.
\newblock A quantum--quantum metropolis algorithm.
\newblock {\em Proceedings of the National Academy of Sciences},
  109(3):754--759, 2012.

\bibitem{chowdhury2016quantum}
Anirban~Narayan Chowdhury and Rolando~D Somma.
\newblock Quantum algorithms for gibbs sampling and hitting-time estimation.
\newblock {\em arXiv preprint arXiv:1603.02940}, 2016.

\bibitem{haah2015sample}
Jeongwan Haah, Aram~W Harrow, Zhengfeng Ji, Xiaodi Wu, and Nengkun Yu.
\newblock Sample-optimal tomography of quantum states.
\newblock {\em arXiv preprint arXiv:1508.01797}, 2015.

\end{thebibliography}

\end{document}